\newcommand{\fa}{\mathtt{forall}}
\newcommand{\fea}{\mathtt{forall2}}
\newcommand{\ex}{\mathtt{exists}}
\newcommand{\lar}{\mathtt{l}}
\newcommand{\rar}{\mathtt{r}}
\newcommand{\repr}{\mathtt{repr}}
\newcommand{\scut}{\mathtt{shortcut}}
\newcommand{\compid}{\textsc{Comp-Id}}
\newcommand{\reduc}{\textsc{Reduce}}
\newcommand{\contr}{\textsc{Contract}}
\newcommand{\ctree}{\textsc{Compute-Tree}}
\newcommand{\faint}{\textsc{Forall-Aux}}
\newcommand{\reptr}{\textsc{Report-Triangles}}
\newcommand{\maybeqed}{}
\newcommand{\lson}{\text{left}}
\newcommand{\rson}{\text{right}}
\newcommand{\imid}{\text{mid}}
\newcommand{\ipar}{\text{par}}
\newcommand{\newreptheorem}[2]{%
\newenvironment{rep#1}[1]{%
\spnewtheorem*{rep@theorem#1}{#2 \ref{##1}}{\normalfont\bfseries}{\itshape}
 \def\rep@title{#2 \ref{##1}}%
 \begin{rep@theorem#1}}%
 {\end{rep@theorem#1}}}
\newtheorem{theorem}{Theorem}
\newtheorem{lemma}{Lemma}
\newtheorem{corollary}{Corollary}
\begin{document}

\title{Fast and simple connectivity in graph timelines}
\date{}
\author{Adam Karczmarz 
\thanks{Supported by the grant NCN2014/13/B/ST6/01811 of the Polish Science Center. Partially supported by FET IP project MULTIPLEX 317532.}}
\author{Jakub Łącki
\thanks{Jakub Łącki is a recipient of the Google Europe Fellowship in Graph Algorithms, and this research is supported in part by this Google Fellowship.}}
\affil{University of Warsaw}
\affil{\mailsa}

\maketitle

\begin{abstract}
In this paper we study the problem of answering connectivity queries
about a \emph{graph timeline}.
A graph timeline is a sequence of undirected graphs $G_1,\ldots,G_t$ on
a common set of vertices of size $n$ such that each graph is obtained
from the previous one by an addition or a deletion of a single edge.
We present data structures, which preprocess the timeline
and can answer the following queries:
\begin{itemize}
  \item $\fa(u,v,a,b)$ -- does the path $u\to v$ exist in \emph{each} of $G_a,\ldots,G_b$?
  \item $\ex(u,v,a,b)$ -- does the path $u\to v$ exist in \emph{any} of $G_a,\ldots,G_b$?
  \item $\fea(u,v,a,b)$ -- do there exist two edge-disjoint paths connecting $u$ and $v$ in \emph{each} of $G_a,\ldots,G_b$?
\end{itemize}
We show data structures that can answer $\fa$ and $\fea$ queries
in $O(\log n)$ time after preprocessing in $O(m+t\log n)$ time.
Here by $m$ we denote the number of edges that remain unchanged in each graph of the timeline.
For the case of $\ex$ queries, we show how to extend an existing data structure
to obtain a preprocessing/query trade-off of $\langle O(m+\min(nt, t^{2-\alpha})), O(t^\alpha)\rangle$
and show a matching conditional lower bound.
\end{abstract}

\section{Introduction}

In this paper we revisit the problem of maintaining the connectivity
information in a \emph{graph timeline}.
The problem was formulated and solved in a recent paper by Łącki and Sankowski
\cite{Lacki:2013}.
They define a graph timeline to be a sequence of graphs
$G_1,G_2,\ldots,G_t$ on a common set of vertices $V$ of size $n$ such that
the graph $G_i$ is obtained from $G_{i-1}$ by adding or deleting a single edge.
Their goal was to preprocess the graph timeline to build a data structure that
may answer connectivity queries regarding a contiguous fragment of the timeline:
\begin{itemize}
\item $\fa(u, v, a, b)$ --- are vertices $u$ and $v$ connected by a path in \emph{each} of $G_a, G_{a+1}, \ldots, G_b$?
\item $\ex(u, v, a, b)$ --- are vertices $u$ and $v$ connected by a path in \emph{any} of $G_a, G_{a+1}, \ldots, G_b$?
\end{itemize}
We stress that the entire timeline is revealed in the very beginning for preprocessing,
and after that the queries may arrive in an online fashion.

Throughout this paper, we write $\langle f(n, m, t), g(n, m, t)\rangle$ to denote a data structure,
whose preprocessing time is $f(n, m, t)$ and the query time is $g(n, m, t)$.

In the case of $\fa$ queries, Łącki and Sankowski presented an
$\langle O(m+t\log t\log \log t\log n),$ $O(\log n\log \log t)\rangle$ data structure.
Here by $m$ we denote the number of edges that remain unchanged in each of $G_1,\ldots,G_t$.
Their data structure is Monte Carlo randomized and the query time is amortized.
For $\ex$ queries they give an $\langle O(m+nt),O(1)\rangle$ data structure.

We improve the results of~\cite{Lacki:2013} and show new algorithms, which are more efficient, simpler and deterministic.
In addition, we also develop an extended data structure that may efficiently answer an even more complex query regarding 2-edge-connectivity:
\begin{itemize}
\item $\fea(u, w, a, b)$ --- are vertices $u$ and $v$ connected by two edge-disjoint paths in \emph{each} of $G_a, G_{a+1}, \ldots, G_b$?
\end{itemize}
Moreover, we give new conditional lower bounds for the problem of answering $\ex$ queries, which also improves the results
of~\cite{Lacki:2013}.

\subsection{Related work}
A rich body of connectivity-related dynamic problems has been studied in the area of networks and distributed computing.
A number of such problems has been surveyed in~\cite{Casteigts12}.
In a typical scenario, we work with a sequence of graphs $G^t = G_1, \ldots, G_t$ that represent the states of an evolving network at different points in time.
However, the properties of these graphs, which are of interest, such as \emph{T-interval connectivity}~\cite{Kuhn10} or \emph{time-respecting paths}~\cite{Kempe02} are usually much more complex than what can be studied with ordinary connectivity queries, that is queries about the existence of a path connecting two given vertices in a particular graph.
For example, the problem of T-interval connectivity consists of deciding if for every subsequence $G_a, \ldots, G_{a+T-1}$ of $T$ consecutive graphs in~$G^t$, the intersection $G_a \cap \ldots \cap G_{a+T-1}$ of these graphs contains a~connected component spanning all vertices.
Here we define the intersection of two graphs to be the graph obtained by intersecting their edge sets.

We believe that the queries we consider in this paper are powerful enough to study interesting properties of evolving networks.
A $\fa$ query checks if two vertices are connected with a~path in every graph among $G_a, \ldots, G_b$, but the path can be different in each of the graphs and may not even exist in the intersection of these graphs.
Even stronger is a $\fea$ query, checking whether two vertices are connected with two edge-disjoint paths in each graph of the given fragment.
This may serve as a measure of robustness of connection between two nodes of a~network.

The algorithms that process graph timelines can also be considered \emph{semi-offline} counterparts of dynamic graph algorithms.
The updates are given upfront, but the queries may arrive in an online fashion,
i.e. they are issued one by one, only after the preprocessing is finished.
A~possible scenario for the semi-offline model would be to collect and index the history of evolving network up to
some point of time and then use the queries to analyze
various properties of the network efficiently.

It is worth noting that the knowledge of the entire history of changes in most cases leads to
data structures faster and simpler than the best online ones.
However, this property has rarely been exploited to design efficient algorithms.
Eppstein~\cite{Eppstein:1994} has shown an algorithm, which, given a weighted graph $G$ and a sequence of $k$
edge weight updates, computes the weight of the minimum spanning tree after each update in $O((m + k)\log n)$ time.

\subsection{Our results}
We show $\langle O(m + t \log n), O(\log n)\rangle$ data structures for answering $\fa$ queries and $\fea$ queries.
The data structures use $O(t\log n)$ space.
This improves the results of~\cite{Lacki:2013} in a number of ways: our algorithms are faster and deterministic, use less space,
the time bounds are worst-case and the query time is independent of the length of the timeline.
We also introduce $\fea$ queries, which were not considered before.
On top of that, our algorithms are arguably simpler.

What is interesting, we obtain a solution for the 2-edge-connectivity problem, which is much more efficient than what has been achieved in the dynamic case.
The best known algorithm for 2-edge-connectivity is due to Holm et al.~\cite{Holm:2001}.
It processes $t$ updates in $O((t+m) \log^4 n)$ time, where $m$ is the initial number of edges, and answers queries in $O(\log n)$ time.
Our algorithm may preprocess the timeline in only $O(m + t \log n)$ time to answer queries in $O(\log n)$ time.

In the construction of the algorithm for answering $\fa$ queries we use the following two observations.
Consider a timeline $G_1, \ldots, G_t$.
If there is an edge~$uw$ present in every graph among $G_1, \ldots, G_t$,
vertices $u$ and $w$ are equivalent from the point of view of any query, so the edge $uw$ can be contracted in each graph.
Once we do that, we are left with $O(t)$ edges in total, each being added or deleted at some point of time.
Thus, if there are much more than~$t$ vertices, some vertices are isolated in every $G_1, \ldots, G_t$, and can be safely treated separately in the beginning and removed.
These ideas are then used recursively in a divide-and-conquer algorithm, which at each step halves the length of the timeline to compute a segment tree over the sequence $G_1, \ldots, G_t$.
This segment tree stores connectivity information about every individual graph in the timeline.
Here we adapt the ideas of Eppstein's reduction and contraction scheme used for offline computation of minimum spanning trees~\cite{Eppstein:1994}.

Next, we use a fingerprinting scheme to identify vertices belonging to the same connected components in multiple consecutive graphs,
which allows us to answer $\fa$ queries.
Additionally, our fast algorithm for answering queries uses a data structure for efficient testing of equality of contiguous subsequences of a given sequence.
This is then extended to handle $\fea$ queries.

For $\ex$ queries, we show how to leverage the
$\langle O(m+nt), O(1)\rangle$ data structure from~\cite{Lacki:2013}
to build an $\langle O(m+\min(nt,t^{2-\alpha})), O(t^\alpha)\rangle$ data structure,
where $\alpha$ is a parameter from the range $[0,1)$, which can be chosen arbitrarily.
All of the presented algorithms are simple and can easily be implemented.

Moreover, we develop a conditional lower bound for the problem of answering $\ex$ queries.
We show that answering $t$ $\ex$ queries on a timeline of length~$t$, consisting of graphs with $O(t)$ edges,
can be used to detect triangles in a graph with $O(t)$ edges.
This implies a conditional lower bound of $\Omega(t^{1.41})$ and improves the result of~\cite{Lacki:2013}, where a weaker lower bound was shown.
We also show that an $O(t^{1.5-\epsilon})$ \emph{combinatorial} algorithm for the aforementioned problem would imply a subcubic \emph{combinatorial} algorithm for
the Boolean matrix multiplication problem, which would be a major breakthrough.
At the same time, our improved data structure for $\ex$ queries may solve this problem in $O(t^{1.5})$ time, which means that it is, in some sense, optimal.

\subsection{Organization of this paper}
In Section~\ref{sec:preliminaries} we introduce notation and give a few simple properties of segment trees, which we later use.
Section~\ref{sec:tree_structure} describes the basic version of our data structure, which is then extended to handle $\fa$ and $\fea$ queries.
Then, in Section~\ref{sec:forall} we present an algorithm for answering $\fa$ queries.
Next, in Section~\ref{sec:exists} we develop improved lower bounds for the problem of answering $\ex$ queries, as well as show that a trade-off between query and preprocessing time is possible.
Finally, in Section~\ref{sec:open_problems} we discuss the possible directions of future research.

\section{Preliminaries}\label{sec:preliminaries}

A \emph{graph timeline} is a sequence $G^t$ of graphs $G_1,G_2,\ldots,G_t$,
where $G_i=(V,E_i)$.
We call each individual graph in $G^t$ a \emph{version}.
For each $i\in[1,t)$ we have ${|E_i\oplus E_{i+1}|=1}$, i.e. $E_{i+1}$
is obtained from $E_i$ by adding or deleting a single edge.
We assume that the input is given as the set $E_1$ and
a list of $t-1$ operations that describe, for each $i \in [1, t-1]$, how to obtain $E_{i+1}$ from $E_i$.

Throughout this paper we work with intervals of integers, that is $[a, b]$ denotes
\linebreak $\{a, {a+1}, \ldots, b\}$.
We say that edge $(u,v)$ is \emph{alive} in the interval $[x,y]$ iff $(u,v)\in E_j$
for each $j\in [x,y]$.
For each edge $e \in E_1 \cup \ldots \cup E_t$ we define $L(e)$ to be the set of maximal
intervals such that $e$ is alive in each of them.
An edge $e$ is called \emph{permanent} iff $L(e)=\{[1,t]\}$, that is,
it is present in every version.
Otherwise, we say that $e$ is a \emph{temporary} edge.
We denote by $m$ the number of permanent edges.
The number of temporary edges is at most $t$.
We begin the initialization of our data structures by finding
the sets $L(e)$ in $O(|E_1|+t)=O(m+t)$ time.

We denote by $\Delta_a^+$ the set of edges $e$ such that $[a,x]\in L(e)$
for some ${x\in[a,t]}$, i.e., edges present in $G_a$, but not in $G_{a-1}$.
Similarly, let $\Delta_b^-$ be the set of edges~$e$ such that $[x,b]\in L(e)$
for some $x\in[1,b]$.
It is easy to verify that $\sum_{i=1}^t |\Delta_i^+|+\sum_{i=1}^t|\Delta_i^-|=O(m+t)$.
Moreover, for $a\in(1,t]$, we have $|\Delta_a^+|\leq 1$, while
for $b\in[1,t)$ we have $|\Delta_b^-|\leq 1$.

Throughout the paper, we assume that $t\geq n$ and $t=2^B$ for some integer $B\geq 0$.
The latter assumption can be achieved by adding dummy graphs to the timeline.
\subsection{Elementary intervals and the segment tree}
Given $t=2^B$, the set of \emph{elementary intervals} is defined
inductively:
\begin{enumerate}
\item $[1,t]$ is an elementary interval,
\item if $[a,b]$ is an elementary interval, and $a<b$ we let
  $\imid = \left\lfloor \frac{a+b}{2}\right\rfloor$, and define $[a,\imid]$ and
  $[\imid+1,b]$ to be elementary intervals as well.
\end{enumerate}
The set of elementary intervals can be naturally organized into a complete binary tree,
which we call a \emph{segment tree}.
Assuming the above notation, we call $\lson([a,b])=[a,\imid]$
the left child of interval $[a,b]$.
Similarly, $\rson([a,b])=[\imid+1,b]$.
The parent interval of $P$ is denoted by $\ipar(P)$.
We first prove a few properties of elementary intervals.

\begin{lemma}\label{l:anc}
  Every two elementary intervals are either disjoint, or
    one of them is contained in the other.
    The latter is the case iff one of them is a descendant of
    the other in the segment tree.
\end{lemma}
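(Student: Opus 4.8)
The plan is to prove both statements by induction on the structure of the segment tree, exploiting the explicit recursive definition of elementary intervals. First I would establish the following auxiliary fact: if $P = [a,b]$ is an elementary interval and $Q$ is an elementary interval that is a descendant of $P$ in the segment tree, then $Q \subseteq P$. This follows immediately by induction on the depth of $Q$ below $P$: the base case $Q = P$ is trivial, and for the inductive step, $Q$ is a descendant of one of the children $\lson(P) = [a,\imid]$ or $\rson(P) = [\imid+1,b]$, both of which are contained in $[a,b]$; applying the inductive hypothesis gives $Q \subseteq \lson(P) \subseteq P$ or $Q \subseteq \rson(P) \subseteq P$.

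Next I would prove the main dichotomy: for any two elementary intervals $P$ and $Q$, either they are disjoint, or one contains the other, and in the containment case one is a descendant of the other. I proceed by induction on the depth of the lowest common ancestor $R$ of $P$ and $Q$ in the segment tree (equivalently, one may induct on $|R|$, where $R$ is the smallest elementary interval containing both $-$ that such an $R$ exists is clear since $[1,t]$ contains everything). If $R = P$ or $R = Q$, then one is an ancestor of the other, and by the auxiliary fact the ancestor contains the descendant, so we are done. Otherwise $P$ and $Q$ lie in different subtrees of $R$: say $P$ is a descendant of $\lson(R) = [a,\imid]$ and $Q$ is a descendant of $\rson(R) = [\imid+1,b]$ (the roles being symmetric). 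By the auxiliary fact, $P \subseteq [a,\imid]$ and $Q \subseteq [\imid+1,b]$, and since these two child intervals are disjoint, $P$ and $Q$ are disjoint.

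Finally I would verify the "iff" in the second sentence. One direction is exactly the content just proved together with the auxiliary fact: if one of $P,Q$ is a descendant of the other, then one contains the other. For the converse, suppose $P \subseteq Q$ but $P$ is not a descendant of $Q$; if also $Q$ is not a descendant of $P$ then $P$ and $Q$ lie in different subtrees of their lowest common ancestor, hence are disjoint by the argument above, contradicting $P \subseteq Q$ (note $P$ is nonempty), so $Q$ must be a descendant of $P$; but then $Q \subseteq P$, forcing $P = Q$, in which case trivially each is a (improper) descendant of the other. I expect the only delicate point to be fixing the exact convention for "descendant" (whether it includes the interval itself) and making sure the edge case $P = Q$ is handled consistently with that convention; everything else is a routine structural induction on the binary tree.
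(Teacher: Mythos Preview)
Your proof is correct. Note, however, that the paper does not actually supply a proof of this lemma: it is stated as a basic structural property of the segment tree and left unproved. Your argument via the lowest common ancestor and a straightforward structural induction is exactly the standard way to justify this fact, and it fills in the omitted details cleanly; the only thing worth tightening is, as you already flag, fixing the convention that ``descendant'' includes the node itself so that the case $P=Q$ is handled uniformly.
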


\begin{lemma}\label{l:elem}
Every interval $[c,d] \subseteq [1,t]$
can be partitioned into no more than $2\log_2(d-c+1)+2$ disjoint elementary
intervals such that no two intervals from the partition
can be merged into a bigger elementary interval.
The partition can be computed in time $O(\log(d-c+1))$.
\end{lemma}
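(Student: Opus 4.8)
The plan is to prove this by induction on the structure of the segment tree, using a standard recursive query procedure. Given a target interval $[c,d]$, I will walk down the segment tree starting from the root $[1,t]$. At each node $P = [a,b]$ I consider three cases: if $[a,b] \subseteq [c,d]$, I add $P$ to the partition and stop recursing; if $[a,b] \cap [c,d] = \emptyset$, I discard $P$ and stop; otherwise $[c,d]$ properly overlaps both children, so I recurse into $\lson(P)$ and $\rson(P)$. By Lemma \ref{l:anc} the elementary intervals produced this way are pairwise disjoint (they are nodes of an antichain in the tree), and a straightforward induction shows their union is exactly $[c,d]$.

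Next I would bound the number of intervals returned. The key observation is that on each level of the segment tree, the query procedure recurses into at most two nodes that properly straddle an endpoint of $[c,d]$ — one containing $c$ in its interior region and one containing $d$ — and each such straddling node contributes at most two children to the output (plus possibly continues one level deeper). More carefully: the set of visited nodes forms a tree that is the union of two root-to-node paths (one heading toward $c$, one toward $d$) plus $O(1)$ extra nodes hanging off each path per level. A cleaner way to phrase the count: once the two downward paths diverge at some node, on the $c$-side at each subsequent level we add at most one elementary interval to the output (the right child, when we go left), and symmetrically on the $d$-side. Since the paths have depth at most $\log_2 t$, this gives $O(\log t)$ intervals; tightening the constant using the fact that no output interval can be longer than $d-c+1$ (so the relevant subtree has depth at most $\log_2(d-c+1) + O(1)$) yields the claimed bound $2\log_2(d-c+1)+2$. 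The depth bound follows because any elementary interval of length exceeding $2(d-c+1)$ that intersects $[c,d]$ cannot be contained in $[c,d]$, so the recursion only branches at nodes whose subtree has height $O(\log(d-c+1))$, and above that point the visited set is a single path.

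For the non-mergeability condition, I would argue that the greedy top-down procedure never outputs two elementary intervals that are siblings in the segment tree: if $P$ were output and its sibling $P'$ were also output, then both $P \subseteq [c,d]$ and $P' \subseteq [c,d]$, so $\ipar(P) = P \cup P' \subseteq [c,d]$, and the procedure would have stopped at $\ipar(P)$ and output it instead — contradiction. Since two elementary intervals can be merged into a larger elementary interval only if they are siblings (by the binary-tree structure and Lemma \ref{l:anc}), this establishes minimality. The running time is $O(\log(d-c+1))$ since the number of visited nodes is proportional to the number of output intervals plus the depth of the recursion.

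The main obstacle is getting the constant in the count exactly right rather than just $O(\log t)$: this requires carefully separating the "single path" portion of the recursion near the root (which contributes nothing to the output) from the "branching" portion lower down (which contributes the output intervals), and arguing that the branching portion lives entirely within subtrees of height at most $\log_2(d-c+1) + O(1)$ because longer elementary intervals intersecting $[c,d]$ are never fully contained in it. The disjointness and the union-equals-$[c,d]$ claims are routine given Lemma \ref{l:anc}.
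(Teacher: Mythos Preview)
Your top-down recursive approach is a valid alternative to the paper's bottom-up construction (which starts at the leaves $[c,c]$ and $[d,d]$ and climbs toward their lowest common ancestor $P$, collecting siblings along the way). Both produce the same canonical cover of $[c,d]$. Your argument for the count bound is sound: on each of the two downward paths you output at most one interval per level, and since every output interval has length at most $d-c+1$, outputs occur on at most $\lfloor\log_2(d-c+1)\rfloor+1$ levels per side. Your non-mergeability argument is in fact cleaner than the paper's: the bottom-up procedure can collect two sibling intervals (e.g.\ $[c,c]$ together with its sibling when $[c,c]$ is a left child), so the paper needs a post-hoc pass that merges siblings, whereas your top-down procedure never outputs two siblings by the parent-containment argument you give.

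There is, however, a genuine gap in your running-time claim. Starting the recursion at the root, you necessarily traverse the entire root-to-LCA path and then the two LCA-to-leaf paths; these paths have total length $\Theta(\log t)$, independent of $d-c+1$. Concretely, take $c=t/2$ and $d=t/2+1$: the LCA is the root $[1,t]$, both root-to-leaf paths have length $\log_2 t$, yet only two intervals are output and $d-c+1=2$. Your sentence ``the number of visited nodes is proportional to the number of output intervals plus the depth of the recursion'' is true, but that depth is $\log t$, not $\log(d-c+1)$; and the assertion that ``the branching portion lives entirely within subtrees of height at most $\log_2(d-c+1)+O(1)$'' is false in this example, since the unique branching node is the root. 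The paper's bottom-up formulation avoids this by working from the leaves (with $O(1)$ access to a leaf and to parents) and enumerating only the output intervals; to match the stated $O(\log(d-c+1))$ bound from a top-down viewpoint you would need to jump directly to the LCA (computable in $O(1)$ from the binary representations of $c-1$ and $d$) and then enumerate the output intervals without walking the intervening non-output steps.
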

\begin{proof}
If $c=d$, then the interval does not have to be partitioned at all.
Assume $c<d$.
Consider the leaves $[c,c]$ and $[d,d]$ of the segment tree and let $P$ be
the lowest common ancestor of these intervals, i.e., the smallest elementary interval
which contains both $c$ and $d$.
Our initial partition is formed by the following intervals:
\begin{itemize}
\item $[c,c]$ and $[d,d]$,
\item if both the interval $Q$ and its parent lie on the path from
  $[c,c]$ to $P$ (but excluding $P$) and also $Q=\lson(\ipar(Q))$, we include $\rson(\ipar(Q))$
  (i.e. the sibling of $Q$) in our partition,
\item if both the interval $Q$ and its parent lie on the path from
  $[d,d]$ to $P$ (but excluding $P$) and also $Q=\rson(\ipar(Q))$, we include $\lson(\ipar(Q))$
  (i.e. the sibling of $Q$) in our partition.
\end{itemize}
We first show that the chosen family of intervals $W$ is indeed
a partition of $[c,d]$.
By Lemma~\ref{l:anc}, the chosen intervals are disjoint, since there
are no two such that one of them is an ancestor of the other.
For any interval from $W$, its left endpoint is not less than $c$,
whereas its right endpoint is not larger than $d$.
Hence, $\bigcup W\subseteq [c,d]$.
Moreover, $[c,d]\subseteq \bigcup W$.
It is clear that $\{c, d\} \subseteq \bigcup W$.
To show that $f \in (c,d)$ belongs to $\bigcup W$
consider a path from $[f,f]$ to $P$.
This path either joins the path $[c,c]\to P$
from the right, or joins the path $[d,d]\to P$ from the left.
In both of these cases, the last interval $Q$ of $[f,f]\to P$
before the paths merged ($f\in Q\subseteq [c,d]$) was included in $W$.

Let us count the number of intervals in $W$.
First notice, that every elementary interval in~$W$ is not longer
than $d-c+1$.
Furthermore, each subsequent interval chosen from one of the paths
($[c,c]\to P$ or $[d,d]\to P$) is at least twice as long as the
previous interval taken while climbing that path.
Taking into account the additional intervals $[c,c]$ and $[d,d]$,
we get the bound $2\log_2 (d-c+1)+2$.
The $O(\log (d-c+1))$ time can be achieved by climbing
the two paths simultaneously.

The above procedure does not guarantee that no two elementary
intervals from $W$ can be merged into a larger elementary interval.
However, this can be easily fixed.
Every time when we put into $W$ an elementary interval
such that its sibling in the segment tree is already contained
in $W$, we replace the two siblings with their parent.
As the lengths of the elementary intervals put into $W$
only increase on a path $[c,c]\to P$ or $[d,d]\to P$,
the potential sibling can only be the interval that was the last
to be included in $W$.

Eventually, we might also end up with $W=\{\lson(P), \rson(P)\}$;
then we ought to replace the partition with $\{P\}$.

This fix does not influence the overall time complexity of the
partitioning, which remains $O(\log (d-c+1))$.
\maybeqed\end{proof}

\begin{lemma}\label{l:disj}
If $P_1,P_2,\ldots,P_k$ are disjoint intervals contained in $[1,t]$,
we can partition them into at most $2k\left(\log_2{\frac{t}{k}}+1\right)$
disjoint elementary intervals.
\end{lemma}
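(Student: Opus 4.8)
The plan is to apply Lemma~\ref{l:elem} to each interval $P_j$ separately and then bound the total number of elementary intervals produced, using convexity to show that the worst case occurs when all the $P_j$ have equal length. Writing $\ell_j = |P_j|$ for the length of $P_j$, Lemma~\ref{l:elem} partitions $P_j$ into at most $2\log_2 \ell_j + 2$ elementary intervals, so the total count is at most $\sum_{j=1}^k (2\log_2 \ell_j + 2) = 2k + 2\sum_{j=1}^k \log_2 \ell_j$. Since the $P_j$ are disjoint subintervals of $[1,t]$, we have $\sum_{j=1}^k \ell_j \le t$.

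The key step is to maximize $\sum_{j=1}^k \log_2 \ell_j$ subject to $\sum_j \ell_j \le t$ and $\ell_j \ge 1$. Because $\log_2$ is concave, Jensen's inequality (or a standard Lagrange-multiplier argument) shows the sum is maximized when all $\ell_j$ are equal, i.e. $\ell_j = t/k$, giving $\sum_{j=1}^k \log_2 \ell_j \le k\log_2(t/k)$. Plugging this in yields a total of at most $2k + 2k\log_2(t/k) = 2k\!\left(\log_2\frac{t}{k} + 1\right)$ elementary intervals, which is exactly the claimed bound. (Strictly, one should note $t/k \ge 1$ since $k \le t$, so the logarithm is nonnegative and the bound is meaningful; if $t/k < 1$ is ever relevant, the intervals would have length $1$ and contribute nothing beyond the $2k$ term, which is still dominated.)

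I expect the only mild subtlety to be the convexity argument: one must be careful that the constraint is an inequality $\sum \ell_j \le t$ rather than equality, but since $\log_2$ is increasing, the maximum is attained at $\sum \ell_j = t$, so this causes no difficulty. A second minor point is that Lemma~\ref{l:elem} actually gives the bound $2\log_2(\ell_j) + 2$ only when $\ell_j > 1$ (for $\ell_j = 1$ a single elementary interval suffices, and $2\log_2 1 + 2 = 2$ still serves as a valid upper bound), so the inequality $\sum_j (2\log_2\ell_j + 2)$ holds uniformly. Everything else is a routine substitution, and no new combinatorial insight beyond the per-interval lemma and concavity of the logarithm is needed.
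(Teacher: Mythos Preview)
Your proof is correct and follows essentially the same route as the paper: apply Lemma~\ref{l:elem} to each $P_j$, sum the bounds $2\log_2\ell_j+2$, use disjointness to get $\sum_j \ell_j\le t$, and then invoke Jensen's inequality for the concave function $\log_2$ to conclude. Your additional remarks about the edge cases ($\ell_j=1$, the inequality constraint, $t/k\ge 1$) are more careful than the paper's version but do not change the argument.
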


\begin{proof}
We use Lemma~\ref{l:elem} to partition each of $P_1, \ldots, P_k$.
For $1 \leq i \leq k$, let $l_i = |P_i|$.
Since the intervals are disjoint, their partitions into elementary intervals
are also disjoint.
Hence, by Lemma~\ref{l:elem}, the total size of the partition can be
bounded as follows:
$$2k+2\sum_{i=1}^k\log_2{l_i}\\
\leq 2k+2k\log_2\left(\frac{1}{k}\sum_{i=1}^k l_i\right )\leq 2k+2k\log_2{\frac{t}{k}} \\
  =2k\left(\log_2{\frac{t}{k}}+1\right).$$

  We used the bound $\sum_{i=1}^k l_i\leq t$ and the Jensen's inequality
  for the concave function \linebreak $f(x)=\log_2{x}$.
\maybeqed\end{proof}

As it is much easier to work with elementary intervals,
for each edge $e$ we partition all intervals from $L(e)$
into elementary intervals.
\begin{lemma}\label{l:part}
  All intervals in $\bigcup_{e\in V\times V}L(e)$
  can be partitioned
  into $O(m+t\log n)$ elementary intervals.
  The partition can be performed in time $O(m+t\log n)$.
\end{lemma}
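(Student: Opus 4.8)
The plan is to separate permanent edges from temporary ones. A permanent edge $e$ has $L(e)=\{[1,t]\}$, and $[1,t]$ is itself an elementary interval, so it needs no splitting: this accounts for $m$ elementary intervals and $O(m)$ time. So the work is all in the temporary edges. For a temporary edge $e$ the intervals of $L(e)$ are, by construction, pairwise disjoint and contained in $[1,t]$, so Lemma~\ref{l:disj} applies and splits them into at most $2k_e\left(\log_2(t/k_e)+1\right)$ elementary intervals, where $k_e:=|L(e)|$; the procedure behind Lemma~\ref{l:disj} (which just runs the procedure of Lemma~\ref{l:elem} on each of the $k_e$ intervals separately) does so in time proportional to the number of elementary intervals it outputs. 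Hence everything reduces to bounding $\sum_e 2k_e\left(\log_2(t/k_e)+1\right)$, the sum ranging over temporary edges.

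The two facts I would establish are: (i) writing $S:=\sum_e k_e$ over temporary edges, we have $S=\Theta(t)$; and (ii) the number $r$ of temporary edges satisfies $r\le\binom{n}{2}$. Fact (ii) is immediate, since there are only $\binom{n}{2}$ possible edges on $V$. For fact (i), observe that each of the $t-1$ update operations adds or deletes a single edge: to an addition at step $i$ there corresponds a unique interval of some $L(e)$ with left endpoint $i>1$, and to a deletion at step $i$ a unique interval of some $L(e)$ with right endpoint $i-1<t$, and every such interval belongs to a temporary edge. This gives $t-1\le 2S$. Conversely, every interval of a temporary edge has left endpoint either equal to $1$ (at most one per edge, and any such edge must be deleted at some later step, so these are at most $t-1$ in number) or greater than $1$ (each charged to a distinct addition), which gives $S\le 2(t-1)$.

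With these facts in hand I would finish by Jensen's inequality. First, $\sum_e 2k_e=2S=O(t)$. Next, $2k_e\left(\log_2(t/k_e)+1\right)=2k_e\log_2(2t/k_e)$, and since $x\mapsto x\log_2(2t/x)$ is concave on $(0,\infty)$, $\sum_e k_e\log_2(2t/k_e)\le r\cdot\frac{S}{r}\log_2\frac{2t}{S/r}=S\log_2\frac{2tr}{S}$. Now $S\ge(t-1)/2$ yields $\frac{2tr}{S}\le\frac{4tr}{t-1}\le 8r\le 4n^2$ for $t\ge 2$ (the case $t=1$ is trivial, with no temporary edges at all), so $\log_2\frac{2tr}{S}=O(\log n)$ and the whole sum is $O(S\log n)=O(t\log n)$. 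Together with the $O(m)$ contributed by permanent edges this gives $O(m+t\log n)$ elementary intervals. For the time bound, the sets $L(e)$ are found in $O(m+t)$ time in the preliminaries, emitting $[1,t]$ for each permanent edge costs $O(m)$, and running the procedure of Lemma~\ref{l:disj} on the temporary edges takes time linear in the $O(t\log n)$ intervals produced, for $O(m+t\log n)$ overall. The one genuinely essential point — more than an obstacle — is the lower bound $S=\Omega(t)$: without it one only gets $O(t\log t)$ (think of $t$ temporary edges, each alive in a single step), and it is exactly $S=\Omega(t)$ together with $r=O(n^2)$ that lets us replace $\log t$ by $\log(tr/S)=O(\log n)$; the rest is a routine application of Jensen's inequality.
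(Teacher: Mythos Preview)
Your proof is correct and follows essentially the same approach as the paper: separate permanent edges (each contributing a single elementary interval $[1,t]$) from temporary edges, then bound the total contribution of temporary edges by applying Jensen's inequality to the concave function $x\mapsto x\log_2(\mathrm{const}/x)$, using that $S=\sum_e k_e=\Theta(t)$ and that the number of temporary edges is $O(n^2)$. The paper simply asserts $S\in[t/2,t]$ without justification, whereas you spell out the charging argument for $S=\Theta(t)$ and correctly highlight that the lower bound $S=\Omega(t)$ is what turns $\log t$ into $\log n$; otherwise the two arguments are the same up to cosmetic differences (you fold the ``$+1$'' into the logarithm as $\log_2(2t/k_e)$, the paper splits it off as a separate $2t$ term).
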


\begin{proof}
Denote by $E^*$ the set of temporary edges.
For any $e\in E^*$, let us denote by $q_e$ the number $|L(e)|$.
We have $\sum_{e\in E^*}q_e \in [\frac{t}{2},t]$ and $|E^*|\leq \min(t,n^2)$.
By Lemma \ref{l:disj}, we conclude that the total number of elementary
intervals for temporary edges is at most
\begin{align*}
2\sum_{e\in E^*}\left(q_e\log_2\left(\frac{t}{q_e}\right)+1\right)&  \leq 2|E^*|\left(\sum_{e\in E^*}\frac{q_e}{|E^*|}\log_2\left(\frac{t}{q_e}\right)\right)+2t\\
 \leq&2|E^*|\left(\frac{1}{|E^*|}\sum_{e\in E^*} q_e\right) \log_2\left(\frac{t}{\left(\frac{1}{|E^*|}\sum_{e\in E^*} q_e\right)}\right) +2t\\
     \leq &2t\log_2\left(2|E^*|\right)+2t\\
    =&O(t\log n).
\end{align*}

Here we used the Jensen's inequality for the concave function $f(x)=x\log_2{\frac{t}{x}}$
and weights equal to $\frac{1}{|E^*|}$.
Since each permanent edge has exactly one interval in its partition,
we obtain the desired bound $O(m+t\log{n})$.

\maybeqed\end{proof}

For an elementary interval $[a,b]$, we set $E_{[a,b]}$ to be the set
of edges that contain $[a,b]$ in their partition.
From Lemmas~\ref{l:elem} and \ref{l:part} it follows that each edge
is contained in $O(\log{t})$ sets $E_{[a,b]}$ and the
sum over elementary intervals
$\sum_{[a,b]} E_{[a,b]}$ is of order $O(m+t\log{n})$.

%In the following, for the brevity of description we assume that each edge
%is alive in a single elementary interval.
%Thus, we end up with $O(m+t\log n)$ separate edges.

\section{The data structure}\label{sec:tree_structure}
We now describe a tree-like data structure $T$, which is a crucial part of all our algorithms.
In the following we reserve the name $T$ for this particular data structure.
The data structure~$T$ is based on the set of all elementary intervals organized into a complete binary tree.
This tree has a single node $T_{[a,b]}$ for each elementary interval $[a,b]$.
Denote by $G_{[a,b]}$ the graph $(V,E_a\cap \ldots \cap E_b)$.
Roughly speaking, our goal is to associate with $T_{[a,b]}$ the information about the connected components of $G_{[a,b]}$.
We first give a simple approach for constructing the data structure $T$,
and then show how to speed it up.
We use the following fact.

\begin{lemma}\label{l:egab}
Let $[a,b]$ be an elementary interval such that $[a,b] \neq [1,t]$.
Then \linebreak $E(G_{[a,b]}) = E(G_{\ipar([a,b])}) \cup E_{[a,b]}$.
\end{lemma}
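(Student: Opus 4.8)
The plan is to unwind the definitions of $E(G_{[a,b]})$ and $E_{[a,b]}$ and to use the fact that for an elementary interval $[a,b] \neq [1,t]$ the parent $\ipar([a,b])$ is a strictly larger elementary interval containing $[a,b]$. Recall that $E(G_{[a,b]}) = E_a \cap \ldots \cap E_b$ is exactly the set of edges alive on $[a,b]$, and $E_{[a,b]}$ is the set of edges whose partition (into elementary intervals, as in Lemma~\ref{l:part}) contains $[a,b]$ as one of its pieces. So both inclusions will be proved by reasoning about intervals of aliveness.

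First I would prove the inclusion $E(G_{\ipar([a,b])}) \cup E_{[a,b]} \subseteq E(G_{[a,b]})$, which is the easy direction. If $e \in E(G_{\ipar([a,b])})$, then $e$ is alive on $\ipar([a,b]) \supseteq [a,b]$, hence alive on $[a,b]$, so $e \in E(G_{[a,b]})$. If $e \in E_{[a,b]}$, then $[a,b]$ is one of the elementary intervals in the partition of some maximal aliveness interval of $e$ from $L(e)$; in particular $[a,b]$ is contained in that aliveness interval, so $e$ is alive on $[a,b]$ and again $e \in E(G_{[a,b]})$.

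The reverse inclusion $E(G_{[a,b]}) \subseteq E(G_{\ipar([a,b])}) \cup E_{[a,b]}$ is the part that needs the structure of the partition, and it is where I expect the only real (mild) subtlety. Take $e \in E(G_{[a,b]})$, so $e$ is alive on all of $[a,b]$; let $I \in L(e)$ be the maximal aliveness interval with $[a,b] \subseteq I$. Consider the greedy partition of $I$ into elementary intervals guaranteed by Lemma~\ref{l:elem}, with the property that no two of its pieces merge into a larger elementary interval. By Lemma~\ref{l:anc}, the piece $J$ of this partition that contains the point $a$ is comparable with $[a,b]$, so either $J \subseteq [a,b]$ or $[a,b] \subseteq J$. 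In the case $[a,b] \subseteq J$: if $J = [a,b]$ then $e \in E_{[a,b]}$ and we are done; otherwise $[a,b] \subsetneq J$, so $\ipar([a,b]) \subseteq J \subseteq I$ (since $\ipar([a,b])$ is the smallest elementary interval strictly containing $[a,b]$ and $J$ is an elementary interval strictly containing $[a,b]$), hence $e$ is alive on $\ipar([a,b])$ and $e \in E(G_{\ipar([a,b])})$. In the case $J \subsetneq [a,b]$: then the partition of $I$ splits the elementary interval $[a,b]$ into at least two pieces, and because those pieces are consecutive elementary intervals fitting inside $[a,b]$ with the non-mergeability property, a short argument (walking up the segment tree from each such piece, as in the proof of Lemma~\ref{l:elem}) shows this forces $I$ not to contain $[a,b]$ — contradicting $[a,b] \subseteq I$. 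Concretely: since $[a,b]$ is itself an elementary interval contained in $I$, the greedy partition would never cut inside it; so this case cannot occur. Thus $e \in E(G_{\ipar([a,b])}) \cup E_{[a,b]}$, completing the proof. $\qed$
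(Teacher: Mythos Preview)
Your proof is correct and follows essentially the same approach as the paper: both directions of the inclusion are handled the same way, and the heart of the argument---that a non-mergeable partition of an interval $I \supseteq [a,b]$ into elementary intervals cannot split the elementary interval $[a,b]$ into several pieces---is exactly the point the paper isolates. The only difference is that where you assert this fact (``the greedy partition would never cut inside it''), the paper spells out the argument: take the shortest piece $Q_1$ covering part of $[a,b]$, observe its sibling $Q_2$ lies in $[a,b]$ but cannot be in the partition (non-mergeability), nor can any descendant (minimality of $Q_1$) or ancestor (disjointness from $Q_1$), so $Q_2$ is uncovered---a contradiction.
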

\begin{proof}
Recall that $E(G_{[a,b]}) = E(G_a) \cap \ldots \cap E(G_b)$.
Thus, $E(G_{\ipar([a,b])}) \subseteq E(G_{[a,b]})$.
Moreover, directly from the definitions we have  $E_{[a,b]} \subseteq E(G_{[a,b]})$.
It remains to show $E(G_{[a,b]}) \subseteq E(G_{\ipar([a,b])}) \cup E_{[a,b]}$.

Consider an edge $e \in E(G_{[a,b]}$.
If $e$ is alive in some interval $[c,d] \supseteq [a,b]$ such that $\ipar([a,b])\subseteq [c,d]$,
we have $e \in E(G_{\ipar([a,b])})$.
To complete the proof it remains to consider the case when
$e$ is alive in an interval $[c,d]$, and $\ipar([a,b])\not\subseteq [c,d]$.
We show that in this case $e \in E_{[a,b]}$.
In other words, $[a,b] \in X$, where $X$ is the partition of elements
of $L(e)$ into elementary intervals.

From $\ipar([a,b])\not\subseteq [c,d]$ it follows that no ancestor
of $[a,b]$ is a part of $X$.
At the same time, every elementary interval that is neither an
ancestor nor a descendant of $[a,b]$ is disjoint with $[a,b]$ (Lemma~\ref{l:anc}), 
so it does not belong to~$X$ either.
Therefore, each elementary interval that belongs to $X$ and intersects $[a,b]$
is either $[a,b]$ or one of its descendants.
Consequently, $X$ contains a subset of disjoint elementary intervals $Y$
such that $\bigcup Y=[a,b]$.
Suppose $Y\neq \{[a,b]\}$. 
Let~$Q_1$ be the shortest interval from $Y$. Note that $Q_1\neq [a,b]$.
Also, by Lemma~\ref{l:elem}, the sibling~$Q_2$ of $Q_1$ is not
contained in $Y$.
As $Q_1$ is the shortest, no descendant of $Q_2$ is contained in $Y$.
Moreover, as $Q_1\in Y$, no ancestor of $Q_2$ is contained in $Y$.
Hence, $Y$ does not cover the integers from $Q_2$, a contradiction.
Thus, we have $Y=\{[a,b]\}$, so $[a,b]\in X$ and, as a result, we have $e\in E_{[a,b]}$.
\maybeqed\end{proof}

%Lemma~\ref{l:elem} guarantees that if an edge $e$ is alive in some interval
%$[c,d]\supseteq[a,b]$, then either $[a,b]$ or some ancestor interval of $[a,b]$
%is contained in the partition of $L(e)$ into elementary intervals.

In the simple approach, we associate with $T_{[a,b]}$ a graph $S_{[a,b]}$, which has a single vertex for each connected component of $G_{[a,b]}$,
and does not contain any edges.
By Lemma~\ref{l:egab}, $G_{[a,b]}$ is obtained from $G_{\ipar([a,b])}$ by adding some edges.
This implies that each component of $G_{[a,b]}$ is a sum of some components of $G_{\ipar([a,b])}$.
To compute $S_{[a,b]}$ we build a graph $H$ on a vertex set $V(S_{\ipar([a,b])})$ and add to it edges of $E_{[a,b]}$ (each edge endpoint has to be mapped to its connected component in $G_{\ipar([a,b])}$) and then find its connected components.
These components are exactly the components of $G_{[a,b]}$.
Observe that during this computation we may also compute a mapping between the vertices of
$S_{\ipar([a,b])}$ and $S_{[a,b]}$.
In the case of $S_{[1,t]}$ we compute a mapping between individual vertices and connected components of $G_{[1,t]}$.

$T$ represents the connected components of every graph in the timeline.
Consider a graph $G_c$.
In order to find a connected component of a vertex $v$ in $G_c$,
we traverse the path in $T$ from $T_{[1,t]}$ to $T_{[c,c]}$.
We compute the connected component of vertex $v$
in every graph $G_{[a,b]}$ on the path.
Observe that if we know the connected component of $v$ in $G_{\ipar([a,b])}$,
we may compute the connected component of $v$ in $G_{[a,b]}$ by
following the mapping between the components of $G_{\ipar([a,b])}$ and $G_{[a,b]}$.
At the end of the traversal, we find the component of
$v$ in $G_{[c,c]} = G_c$.

\subsection{An efficient construction}\label{s:access}
In order to compute the data structure $T$ efficiently, we need to make an additional
optimization, which is crucial for obtaining good running time.

Consider an elementary interval $[a,b]$ and a connected component $C$
of $G_{[a,b]}$.
Assume that within the graphs $G_a, \ldots, G_b$ no edge incident to a
vertex of $C$ is ever added or deleted.
In other words, the edges incident to vertices of $C$ are the same in each of
$G_a, \ldots, G_b$.
This means that in each of $G_a, \ldots, G_b$ vertices of $C$ are connected to
each other, but not connected to \emph{any} vertex outside $C$.
Hence,~$C$ is also a connected component in each of $G_a, \ldots, G_b$.

As a result, there is no need to store $C$ in the descendants of $T_{[a,b]}$.
When searching for a connected component of a vertex $v \in C$ in $G_c$, where $c \in [a,b]$,
we may simply stop the search in the representation of $C$ in $T_{[a,b]}$.
This observation will be used in the reduction phase of the construction of the tree $T$.

We now describe the efficient construction of the tree $T$.
For each node $T_{[a,b]}$ of $T$, where $[a,b]$ is an elementary interval, we compute a graph $S_{[a,b]}$.
The vertices of $S_{[a,b]}$ correspond to \emph{some} of the components of $G_{[a,b]}$.
We say that $v\in V$ is \emph{represented} in $S_{[a,b]}$
if there is a vertex $s\in V(S_{[a,b]})$ that corresponds to a component containing $v$.
The graphs $S_{[a,b]}$ have no edges.\footnote{Defining a graph with no edges may look confusing.
However, we define $S_{[a,b]}$ to be a graph, as we add edges to $S_{[a,b]}$ in our data structure
for 2-edge-connectivity.}

Let $[a,b]$ be an elementary interval.
$S_{[a,b]}$ is computed based
on $S_{\ipar([a,b])}$ (or $(V,\emptyset)$, if $[a,b] = [1,t]$)
in two phases called \emph{reduction} and \emph{contraction}.

In the reduction phase some vertices of $H=S_{\ipar([a,b])}$ are removed, as they are not affected by
any edge addition or deletion that is carried out among $G_a, \ldots, G_b$.
Namely, we mark endpoints of edges in $F = E_{[a,b]} \cup \bigcup_{i=a+1}^b \Delta_i^{+} \cup \bigcup_{i=a}^{b-1} \Delta_i^{-}$
and then remove the unmarked vertices.
Note that the sets $E_{[a,b]}, \Delta_i^{+}$ and $\Delta_i^{-}$ contain edges
of the original graph, so their endpoints have to be mapped to the corresponding vertices of $H$.
The reduction phase is performed only when $b - a + 1 < n$.
It is done by a call $\reduc(H,F)$, which produces
a pair $(S',M)$, where $S'$ is the reduced graph and $M$ is a mapping between $V(S_{\ipar([a,b])})$ and $V(S')\cup \{\perp\}$.
The value of~$\perp$ means that a vertex has been removed and does not have a corresponding vertex in $S'$.
The procedure can be implemented with a simple graph search to work in $O(|H| + |F|)$ time.

In the second phase, called the contraction phase, some of the remaining vertices of $H=S'$ are merged to form $S_{[a,b]}$.
Specifically, the components formed in $S'$ after adding edges $F=E_{[a,b]}$ are contracted.
Again, we use a function $\contr(H,F)$, which produces a pair $(S', M)$ consisting
of the contracted graph $S'$ and the mapping between $H$ and $S'$.
This function can also be easily implemented to work in linear time.

Consider an elementary interval $P$.
Together with $S_P$, the node $T_P$
stores two tables $\lar_P$ and $\rar_P$
mapping vertices of $S_P$ to $V(S_{\lson(P)})\cup\{\perp\}$
and $V(S_{\rson(P)})\cup\{\perp\}$ respectively.
If $\lar_P[k]\neq \perp$, $\lar_P[k]$ is the vertex of $S_{\lson(P)}$ that
corresponds to $k\in V(S_P)$.
$\lar_P[k]=\perp$ means that $P$ is a leaf, or
there is no vertex corresponding to $k$ in $S_{\lson(P)}$.
The table $\rar_P$ is defined analogously.
For simplicity, we also assume that $T_{[1,t]}$ is a left
child of a special node $T_{[0,\infty]}$ and
$S_{[0,\infty]}=(V,\emptyset)$, so that for each
$v\in V$, $\lar_{[0,\infty]}[v]$ points
to the vertex of $S_{[1,t]}$ representing the original vertex $v$.

The graphs $S_P$ along with $\lar$ and $\rar$ pointers are
sufficient to find the component of any vertex $v$ in any of $G_1, \ldots, G_t$.
To access the component of vertex $v$ in $G_c$ we start at vertex $v$ in $S_{[0,\infty]}$ and follow $\lar$ or $\rar$ pointers
in order to reach the leaf $T_{[c,c]}$.
The traversal stops once we reach~$T_{[c,c]}$ or the pointer we want to use
($\lar[k]$ or $\rar[k]$) is equal to $\perp$.
Let $P$ be the elementary interval, where the traversal finishes and $k$ be the vertex in $S_P$, which we reached.
Then, as we later show, $(k,P)$ uniquely identifies the component of vertex $v$ in $G_c$.
The above process can be seen as a function
$\compid(w,a,b,c)$ that follows the path to $T_{[c,c]}$
starting at vertex $w\in V(S_{[a,b]})$.
The pair $(k,P)$, defined as above, is what the call
$\compid(\lar_{[0,\infty]}[v],1,t,c)$ returns.
The full text of the $\compid$ function is given
in Appendix \ref{a:code}.

\begin{lemma}\label{lem:comp-id}
Let $1 \leq c \leq t$.
For any $u \in V$, denote by $(k_u,P_u)$ the value returned by \linebreak
$\compid(\lar_{[0,\infty]}[v],1,t,c)$.
Then, two vertices $v,w\in V$ are connected by a path in $G_c$ iff $k_v=k_w$ and $P_v=P_w$.
\end{lemma}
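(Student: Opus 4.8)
The plan is to prove the two directions separately, relying on the structural facts about the tree $T$ established above, in particular Lemma~\ref{l:egab} and the reduction/contraction description. First I would set up the following invariant, to be verified by induction along the root-to-leaf path traced by $\compid$: whenever the traversal for vertex $u$ is at a node $T_{[a,b]}$ having just arrived at vertex $k\in V(S_{[a,b]})$, the vertex $k$ represents exactly the connected component of $u$ in $G_{[a,b]}$ (and, in particular, $u$ is represented in $S_{[a,b]}$). The base case is the special node $T_{[0,\infty]}$ with $S_{[0,\infty]}=(V,\emptyset)$, where $\lar_{[0,\infty]}[v]$ points to the vertex of $S_{[1,t]}$ representing the singleton-free component of $v$. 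For the inductive step, suppose we are at $T_{P}$ with vertex $k$ representing the component $C$ of $u$ in $G_P$, and we descend to a child $P'$ (say $P'=\lson(P)$, the other case being symmetric) via $\lar_P[k]=k'\neq\perp$. By Lemma~\ref{l:egab}, $E(G_{P'})=E(G_P)\cup E_{P'}$, so $G_{P'}$ is obtained from $G_P$ by adding edges, hence every component of $G_{P'}$ is a union of components of $G_P$; the reduction and contraction phases compute exactly this union structure together with the mapping $\lar_P$, so $k'$ represents the component of $u$ in $G_{P'}$, as required.

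Next I would handle the case where the traversal stops early because the relevant pointer equals $\perp$. By the description in Section~\ref{s:access}, a component $C$ of $G_{P}$ is dropped from the descendants of $T_{P}$ precisely when no edge incident to a vertex of $C$ is added or deleted anywhere in the versions $G_a,\ldots,G_b$ with $P=[a,b]$ — this is exactly what the set $F = E_{[a,b]}\cup\bigcup_{i=a+1}^b\Delta_i^+\cup\bigcup_{i=a}^{b-1}\Delta_i^-$ records in the reduction phase. In that situation $C$ is literally a connected component of every $G_c$ with $c\in[a,b]$, so stopping at $(k,P)$ with $k$ representing $C$ still correctly identifies the component of $u$ in $G_c$. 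Combining this with the invariant, I conclude: for every $u$ and every $c\in[1,t]$, the pair $(k_u,P_u)$ returned by $\compid$ is such that $k_u$ represents the connected component of $u$ in $G_c$, and $P_u$ is the (unique) elementary interval on the path $[1,t]\to[c,c]$ at which that component first appears and is never split thereafter.

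From this the lemma follows quickly. If $v$ and $w$ are connected in $G_c$, they lie in the same component $C$ of $G_c$; then along the common path $[1,t]\to[c,c]$ both traversals track $C$, and both must terminate at the same node — the one where $C$'s representative is dropped (or the leaf) — so $P_v=P_w$ and $k_v=k_w$, since each node stores a single vertex per component. Conversely, if $k_v=k_w$ and $P_v=P_w=:P=[a,b]$, then $v$ and $w$ are both represented by the same vertex $k$ of $S_P$, i.e.\ both lie in the same component of $G_P$; since $c\in[a,b]$ implies $E(G_P)\subseteq E(G_c)$, they are connected in $G_c$ as well.

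The main obstacle I anticipate is the early-termination argument: one must be careful that the $\perp$ pointer is set exactly in the ``nothing incident ever changes'' situation and not, say, because of an unrelated contraction or because the vertex genuinely has no image in the child; disentangling these cases requires tracking how $\lar_P$ and $\rar_P$ are populated by $\reduc$ and $\contr$, and checking that a component which is represented in $S_P$ but absent from $S_{P'}$ is absent for the right reason. The rest — the inductive maintenance of the invariant and the final two-way implication — is routine given Lemma~\ref{l:egab} and the monotone (edge-adding) relationship between a parent interval and its children.
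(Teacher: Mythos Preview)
Your proof is correct and follows essentially the same line as the paper's: both arguments hinge on the fact that when the traversal halts at $(k_u,P_u)$, the component $k_u$ represents in $G_{P_u}$ is unchanged throughout the relevant child interval and hence coincides with the component of $u$ in $G_c$. The paper compresses this into a terse case split on whether $P_v=P_w$, whereas you set up an explicit invariant along the root-to-leaf path; the one spot to tighten is the forward direction---``both traversals track $C$'' is not literally true at the upper levels, but since at the halting node $P_w$ the component of $w$ already equals $C$, the vertex $v\in C$ is at the same node of $S_{P_w}$ and sees the same $\perp$ pointer, forcing $P_v=P_w$.
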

\begin{proof}
Observe that $[c,c] \subseteq P_v$ and $[c,c] \subseteq P_w$.
First, assume that $P_v\neq P_w$, which, by Lemma~\ref{l:anc}
means that one of the intervals contains the other one.
Without loss of generality suppose that $P_v\subseteq \lson(P_w)$.
Then $k_w$ is a component of $G_{P_w}$ that is not incident to any changes
in the time interval $\lson(P_w)$, while $v$ is in some component of $G_{P_w}$
that undergoes changes in $\lson(P_w)$.
Thus, these are different components.
If $P_v=P_w$, then both $k_v$ and $k_w$ are components of $G_{P_v}$ not
incident to any changes in the time interval $P_v$.
Both $v$ and $w$, however, are represented in $S_{P_v}$, so they are in
the same component iff $k_v=k_w$.
\maybeqed\end{proof}

Let us bound the time needed to build $T$.
We begin with an auxiliary lemma,
whose proof is based on the fact that we perform the reduction.

\begin{lemma}\label{lem:short_sp}
  Let $[a,b]$ be an elementary interval.
  Then $|V(S_{[a,b]})|\leq \min(8(b-a+1),n)$.
\end{lemma}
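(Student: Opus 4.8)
The bound $|V(S_{[a,b]})|\le n$ is immediate, since every vertex of $S_{[a,b]}$ corresponds to a distinct connected component of $G_{[a,b]}$, and there are at most $n$ of those. The real content is the bound $|V(S_{[a,b]})|\le 8(b-a+1)$, and the plan is to prove it by induction on the depth of $[a,b]$ in the segment tree, using the fact that the reduction phase keeps only components touched by an edge whose status changes somewhere in $[a,b]$.

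First I would set up the induction. For the root $[1,t]$ the claim is $|V(S_{[1,t]})|\le 8t$, which holds trivially (at most $n\le t$ components). For an elementary interval $[a,b]\ne[1,t]$ with parent $P=\ipar([a,b])$, write $\ell=b-a+1$. If $\ell\ge n$ the reduction phase is skipped, but then $|V(S_{[a,b]})|\le n\le \ell\le 8\ell$ and we are done; so assume $\ell<n$. In this case $S_{[a,b]}$ is obtained from $S_P$ by a call $\reduc(S_P,F)$ followed by a contraction, where $F = E_{[a,b]} \cup \bigcup_{i=a+1}^{b}\Delta_i^{+}\cup\bigcup_{i=a}^{b-1}\Delta_i^{-}$. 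Contraction only decreases the vertex count, so it suffices to bound the number of vertices of $S_P$ that get \emph{marked}, i.e. that are endpoints (after mapping to components of $G_P$) of some edge in $F$; every vertex of $S_{[a,b]}$ arises from such a marked vertex.

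Next I would bound $|F|$, hence the number of marked endpoints by $2|F|$. The term $\bigcup_{i=a+1}^{b}\Delta_i^{+}$ has size at most $b-a=\ell-1$, since $|\Delta_i^{+}|\le 1$ for $i\in(1,t]$ (from the Preliminaries); similarly $\bigcup_{i=a}^{b-1}\Delta_i^{-}$ has size at most $\ell-1$. For $E_{[a,b]}$ I would argue it contributes at most $2\ell$ edges: every $e\in E_{[a,b]}$ has $[a,b]$ as one of the elementary intervals in the partition of some interval of $L(e)$, and since no two pieces of that partition can be merged (Lemma~\ref{l:elem}), $[a,b]$ has a sibling not in the partition, meaning the interval of $L(e)$ containing $[a,b]$ has an endpoint strictly inside $\ipar([a,b])$ — so either $e$ is born after $a-1$ within $[a,b]$... more cleanly: such an $e$ is alive on an interval that neither contains $[a-1,a-1]$ nor $[b+1,b+1]$ is too weak; instead I would bound it by noting that each $e\in E_{[a,b]}$ corresponds to a distinct interval of $L(e)$ that is contained in $\ipar([a,b])$, hence has both endpoints in $\ipar([a,b])$, which has length $\le 2\ell$; charging each such edge to an endpoint of its alive-interval (a position in $\ipar([a,b])$), and observing each position is charged $O(1)$ times, gives $|E_{[a,b]}|=O(\ell)$. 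Summing, $|F|\le c\ell$ for a small constant, and with careful accounting the constant works out so that the number of distinct marked vertices of $S_P$ is at most $8\ell$.

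The main obstacle will be pinning down the constant $8$ in the $|E_{[a,b]}|$ estimate — i.e. showing that the number of (temporary) edges whose partition includes the specific elementary interval $[a,b]$ is at most $2(b-a+1)$ or so, not merely $O(b-a+1)$. The clean way is: if $[a,b]$ is in the partition of an interval $I\in L(e)$, then because the partition is merge-free, $I$ extends past $[a,b]$ on at least one side only up to a boundary forced by the segment-tree structure, and in fact $I\subseteq \ipar([a,b])$; distinct edges give distinct such intervals $I$ (an edge can supply $[a,b]$ at most once), and each is pinned down by its left endpoint, which lies in $\ipar([a,b])$, a set of size $2(b-a+1)$. Hence at most $2(b-a+1)$ edges, giving at most $4(b-a+1)$ marked endpoints from $E_{[a,b]}$ and $2(b-a+1)$ more from the $\Delta^{\pm}$ terms, for the stated total $8(b-a+1)$; the induction hypothesis is not even needed for the $\le 8\ell$ part, only the disjointness/merge-free structure of the partitions, but I would keep it available in case the mapping of endpoints into $S_P$ needs components of $G_P$ to be already bounded.
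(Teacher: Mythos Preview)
Your overall plan matches the paper's: bound the vertices surviving reduction by $2|F|$ and then bound $|F|$. The gap is in your bound on $|E_{[a,b]}|$. The claim that the alive interval $I\in L(e)$ containing $[a,b]$ satisfies $I\subseteq\ipar([a,b])$ is false: with $t=8$ and $e$ alive exactly on $[1,7]$, the merge-free partition of $[1,7]$ is $\{[1,4],[5,6],[7,7]\}$, so $e\in E_{[5,6]}$, yet $I=[1,7]\not\subseteq\ipar([5,6])=[5,8]$. Your subsequent charging to a left endpoint then collapses: left endpoints of alive intervals need not lie in the parent, and even if they did, many distinct edges can share one (all edges present in $G_1$ have an alive interval beginning at~$1$), so ``each position is charged $O(1)$ times'' is unjustified as written.

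The correct observation---and the one the paper uses---is that $I$ cannot \emph{contain} $\ipar([a,b])=[a_1,b_1]$ (otherwise the sibling of $[a,b]$ would also lie in the partition of $I$, and the two pieces would merge, contradicting Lemma~\ref{l:elem}). Thus if $[a,b]=\lson([a_1,b_1])$ the right endpoint of $I$ lies in $[b,b_1-1]$, i.e.\ $e\in\Delta_r^-$ for some $r\in[a_1,b_1-1]$; symmetrically for a right child. Hence $E_{[a,b]}\subseteq C_1:=\bigcup_{i=a_1+1}^{b_1}\Delta_i^+\cup\bigcup_{i=a_1}^{b_1-1}\Delta_i^-$, and since trivially $C\subseteq C_1$ and $|C_1|\le 2(b_1-a_1+1)=4(b-a+1)$ (here is where $|\Delta_j^\pm|\le 1$ enters, the fact you never invoked), one gets $|V(S_{[a,b]})|\le 2|E_{[a,b]}\cup C|\le 2|C_1|\le 8(b-a+1)$ directly, with no induction and no separate estimate for $|E_{[a,b]}|$.
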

\begin{proof}
Clearly,  $|V(S_{[a,b]})|\leq n$.
For $b-a+1\geq n$, we have $8(b-a+1)\geq n$, so $|V(S_{[a,b]})|\leq n\leq\min(n,8(b-a+1))$
holds.
Assume that $b-a+1<n$.
  The graph $S_{[a,b]}$ is constructed from $S_{\ipar([a,b])}$ by applying
reduction and contraction.
Let $C=\bigcup_{i=a+1}^{b}\Delta_i^+\cup\bigcup_{i=a}^{b-1}\Delta_i^-$.
The reduction produces a graph $S'$ of at most $2|E_{[a,b]}\cup C|$
vertices.
The contraction does not increase the number of vertices.
Therefore, $|V(S_{[a,b]})|\leq 2|E_{[a,b]}\cup C|$.

Let $\ipar([a,b])=[a_1,b_1]$ and consider the analogous
set $C_1$ for $\ipar([a,b])$, i.e. \linebreak
$C_1=\bigcup_{i=a_1+1}^{b_1}\Delta_i^+\cup\bigcup_{i=a_1}^{b_1-1}\Delta_i^-$.
For $1 < i \leq t$ we have $|\Delta_i^+|\leq 1$ and for $1 \leq i < t$ we have $|\Delta_i^-|\leq 1$.
Moreover, since $\ipar([a,b])=[a_1,b_1]$, we have $b_1 - a_1 + 1 = 2(b-a+1)$.
Thus, $|C_1| \leq 2(b_1 - a_1+1) = 4(b-a+1)$.
To complete the proof, we show that both $C$ and $E_{[a,b]}$ are
subsets of $C_1$, which implies that $|V(S_{[a,b]})| \leq 2|E_{[a,b]}\cup C| \leq 2|C_1| \leq 8(b-a+1)$.
Clearly, $C\subseteq C_1$, as the sum in $C$ goes through
less summands than the sum defining $C_1$.
To show that $E_{[a,b]} \subseteq C_1$, consider $e\in E_{[a,b]}$.
Suppose that $[a,b]$ is the left child of $\ipar([a,b])=[a_1,b_1]$.
We show that $e\in\bigcup_{i=a_1}^{b_1-1}\Delta_i^-$.
By Lemma $\ref{l:elem}$, we have both $e\notin E_{[a_1,b_1]}$
and $e\notin E_{\rson([a_1,b_1])}$.
Thus, the edge $e$ is deleted in some version~$G_j$ for $j\in[b+1,b_1]$,
which means $e\in \Delta_{j-1}^-$.
Analogously we prove that if $[a,b]=\rson([a_1,b_1])$
then $e\in\bigcup_{i=a_1+1}^{b_1}\Delta_i^+$.
Hence $e\in C_1$.
\maybeqed\end{proof}

To build $T$ we use a recursive procedure $\ctree(a,b)$, which computes the subtree rooted at $T_{[a,b]}$.
It produces each graph $S_{[a,b]}$ based on $S_{\ipar([a,b])}$ by applying reduction and contraction.
During the computation of $T$,
we maintain an auxiliary table $\repr$, fulfilling
the following invariant: both at the beginning and at the end of the call
$\ctree(a,b)$, $\repr[v]$ is the vertex of $S_{\ipar([a,b])}$
representing $v\in V$, if such vertex exists.
Initially, we have $\repr[v]=v$, which does not break the
invariant, as we have previously set
$\ipar([1,t])=[0,\infty]$ and $S_{[0,\infty]}=(V,\emptyset)$.
The $\repr$ table is used implicitly by the procedures $\reduc$
and $\contr$ to map the endpoints of edges from
 $E_{[a,b]}$, $\Delta_i^+$ and $\Delta_i^-$ to vertices
of $S_{\ipar([a,b])}$ in constant time.

All the computed tables use linear space and can be
accessed in constant time, as we can identify the vertices
of introduced graphs $S_P$ with natural numbers $\{1,2,\ldots\}$
and the $\perp$ value with $0$.
The total used space is asymptotically no more than the time
spent on computing $T$, that is $O(m+t\log{n})$.
\begin{algorithm}[H]
\begin{algorithmic}[1]
    \Procedure{compute-tree}{$a, b$} \Comment{$[a,b]$ -- elementary interval}
        \State $P:=\ipar([a,b])$
        \If{$b-a+1 < n$} \Comment{Reduction is only done for short elementary intervals.}
            \State $C:=\bigcup_{i=a+1}^b \Delta_i^+ \cup \bigcup_{i=a}^{b-1} \Delta_i^-$
            \State $U := $vertices of $V$ incident with any edge of $E_{[a,b]}\cup C$
            \State $(S',M'):=\Call{reduce}{S_P, E_{[a,b]}\cup C}$
        \Else
            \State $U := V$
            \State $(S',M') = (S_P, \texttt{id})$
        \EndIf
        \For{$u\in U$}
          \State $\texttt{mem}[u]=\repr[u]$ \Comment{remember old $\repr$ values}
        \EndFor
        \For{$u \in U$}
            \State $\repr[u]$ := $M'(\repr[u])$
        \EndFor
        \State $(S'',M''):=\Call{Contract}{S', E_{[a,b]}}$ \Comment{$M''$ maps $V(S_P)$ to $V(S_{[a,b]})$.}
        \State $S_{[a,b]}:=S''$
        \For{$k \in S_{[a,b]}$} \Comment{initialize $\lar$ and $\rar$ pointers to $\perp$}
          \State $\lar_{[a,b]}[k]:=\perp$
          \State $\rar_{[a,b]}[k]:=\perp$
        \EndFor
        \For{$u \in U$}
            \State $\repr[u]$ := $M''(\repr[u])$
        \EndFor
        \For{$s \in S_P$} \Comment{set the parent $\lar$ and $\rar$ pointers}
            \State $s':=M'(s)$
            \If{$s'\neq \perp$}
                \State $s':=M''(s')$
            \EndIf
            \If{$[a,b]=\lson(P)$}
                \State $\lar_P[s]:=s'$
            \Else
                \State $\rar_P[s]:=s'$
            \EndIf
        \EndFor
        \If{$a<b$} \Comment{compute the children}
            \State $\imid:=\left\lfloor\frac{a+b}{2}\right\rfloor$
            \State \Call{compute-tree}{$a, \imid$}
            \State \Call{compute-tree}{$\imid+1, b$}
        \EndIf
        \For{$u\in U$} \Comment{restore \texttt{repr} to the initial state}
            \State $\repr[u]:=\texttt{mem}[u]$
        \EndFor
    \EndProcedure
\end{algorithmic}
\end{algorithm}

\begin{lemma}
The total running time of $\ctree(1,t)$ is $O(m+t\log n)$.
\end{lemma}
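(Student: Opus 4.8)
The plan is to analyze the recursion $\ctree(1,t)$ node by node and show that the work done at the node $T_{[a,b]}$ is $O(|S_{\ipar([a,b])}|+|F_{[a,b]}|)$, where $F_{[a,b]}=E_{[a,b]}\cup C$ with $C=\bigcup_{i=a+1}^{b}\Delta_i^+\cup\bigcup_{i=a}^{b-1}\Delta_i^-$, and then sum this over all elementary intervals $[a,b]$. The per-node bound is immediate from the stated running times of $\reduc$ and $\contr$ (both linear in the size of the input graph plus the edge set passed to them), together with the observation that all the remaining loops in $\ctree$ -- the ones over $U$, over $S_{[a,b]}$, and over $S_P$ -- run in time $O(|U|+|S_P|+|S_{[a,b]}|)$, and $|U|\le 2|F_{[a,b]}|$, while by Lemma~\ref{lem:short_sp} we have $|S_{[a,b]}|\le|S_P|$ up to a constant (in fact $|S_{[a,b]}|\le 8(b-a+1)$ and $|S_P|\le 8(b_1-a_1+1)=16(b-a+1)$ when $b-a+1<n$). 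The $\repr$ restore loop at the end is also over $U$, so it is absorbed. Hence the total time is $O\!\left(\sum_{[a,b]}\big(|S_{\ipar([a,b])}|+|E_{[a,b]}|+|C_{[a,b]}|\big)\right)$, where the sum is over all elementary intervals.

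Now I bound each of the three sums separately. For $\sum_{[a,b]}|E_{[a,b]}|$: by Lemma~\ref{l:part} (and the remark following it) the total number of (edge, elementary interval) incidences is $O(m+t\log n)$, so this sum is $O(m+t\log n)$. For $\sum_{[a,b]}|C_{[a,b]}|$: since $|\Delta_i^+|\le 1$ for $i>1$ and $|\Delta_i^-|\le 1$ for $i<t$, we have $|C_{[a,b]}|\le 2(b-a+1)$ for every elementary interval; summing $2(b-a+1)$ over all elementary intervals of a given length gives $O(t)$, and there are $O(\log t)$ distinct lengths, so this sum is $O(t\log t)=O(t\log n)$ (using $t\ge n$ only to write $\log n$, though $\log t$ would suffice). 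Finally, for $\sum_{[a,b]}|S_{\ipar([a,b])}|$: each elementary interval $[a_1,b_1]$ appears as the parent of exactly two children, so $\sum_{[a,b]}|S_{\ipar([a,b])}|=2\sum_{P}|S_P|+O(1)$ (the $O(1)$ for the root's fictitious parent $[0,\infty]$, whose contribution is $n\le t$). By Lemma~\ref{lem:short_sp}, $|S_P|\le\min(8|P|,n)$; summing $\min(8\ell,n)$ over all elementary intervals: for lengths $\ell<n$ each length contributes $O(t)$ over its $t/\ell$ intervals, and there are $O(\log n)$ such lengths, while for lengths $\ell\ge n$ we have at most $t/n\le t/n$ intervals each contributing $n$, totalling $O(t)$ for those; hence $\sum_P|S_P|=O(t\log n)$.

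Combining the three bounds, the total running time of $\ctree(1,t)$ is $O(m+t\log n)$, as claimed. The main obstacle is the third sum, $\sum_{[a,b]}|S_{\ipar([a,b])}|$: naively each graph $S_P$ could have up to $n$ vertices, which would give $O(nt)$ and not $O(t\log n)$. It is precisely the reduction phase -- justified in Subsection~\ref{s:access} and quantified in Lemma~\ref{lem:short_sp} -- that caps $|S_P|$ at $O(|P|)$ for short intervals, and one has to be a little careful that this bound is used at the node $P$ itself (as the input size to $\reduc$/$\contr$ at its children, which then see a graph of size $O(|P|)$ rather than $n$) so that no node contributes more than its "local" length budget. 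Once that is in place, the summation over a complete binary tree is the routine "$\sum$ over levels" calculation sketched above.
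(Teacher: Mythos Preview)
Your overall decomposition is the same as the paper's, but there is one genuine gap: the step where you claim $O(t\log t)=O(t\log n)$ is false. The paper assumes $t\ge n$, so $\log t\ge\log n$, and $t\log t$ can be strictly larger than $t\log n$ (nothing in the setup bounds $t$ polynomially in $n$). Your parenthetical ``using $t\ge n$ only to write $\log n$'' has the inequality going the wrong way.

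The source of the problem is that you applied the uniform per-node bound $O(|S_{\ipar([a,b])}|+|F_{[a,b]}|)$ with $F_{[a,b]}=E_{[a,b]}\cup C_{[a,b]}$ to \emph{all} elementary intervals. In fact, for long intervals ($b-a+1\ge n$) the procedure skips the reduction entirely: it sets $U:=V$ and $(S',M'):=(S_P,\mathrm{id})$, so the set $C$ is never formed and contributes nothing. The correct per-node cost there is $O(n+|E_{[a,b]}|)$ (note also that your inequality $|U|\le 2|F_{[a,b]}|$ fails in this branch, since $U=V$). Summing $O(n)$ over the at most $2t/n$ long intervals gives $O(t)$, and the $|E_{[a,b]}|$ terms are absorbed into $\sum_P|E_P|=O(m+t\log n)$ as you already argued. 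For short intervals ($b-a+1<n$), your bound $|C_{[a,b]}|\le 2(b-a+1)$ is fine, but now there are only $O(\log n)$ levels of such intervals, each summing to $O(t)$, so $\sum_{\text{short }[a,b]}|C_{[a,b]}|=O(t\log n)$ rather than $O(t\log t)$. With this case split---which is exactly what the paper does---the rest of your argument (the $\sum|E_{[a,b]}|$ and $\sum|S_P|$ estimates) goes through unchanged and yields $O(m+t\log n)$.
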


\begin{proof}
We first analyze the time spent in the call $\ctree(a,b)$,
excluding the work in recursive calls.
Let $C$ be $\bigcup_{i=a+1}^{b}\Delta_i^+\cup\bigcup_{i=a}^{b-1}\Delta_i^-$.
Thus $O(|C|) = O(b-a)$.
Recall that the functions $\contr$ and $\reduc$ run in linear time.
For $b-a+1\geq n$, we only perform contraction of $E_{[a,b]}$ in a graph of size $O(n)$,
which requires $O(n+|E_{[a,b]}|)$ time.
The amount of work for $b-a+1<n$ can be bounded by
$O(|V(S_{\ipar([a,b])})|+|C|+|E_{[a,b]}|)$, as $\reduc$ is passed the edges $C\cup E_{[a,b]}$.

To complete the proof, we sum these running times over all elementary intervals.
The term $|E_{a,b}|$ appears in both cases and,
by Lemma~\ref{l:part}, we have $\sum_P E_P=O(m+t\log n)$,
thus we can focus on the other summands.
For the case $b-a+1\geq n$, the remaining work is $O(n)$,
but there are only $O(\frac{t}{n})$ such intervals, so the total
work is $O(t)$.
On the other hand, if $b - a + 1 < n$, by Lemma~\ref{lem:short_sp}, $O(|V(S_{\ipar([a,b])})|) = O(b-a)$, so the total work is
$O(b - a)$.
Hence, the total work on each level of the tree such that
its elementary intervals are shorter than~$n$, is $O(t)$.
The number of such levels is $O(\log n)$, which gives $O(t \log n)$ total time.
The lemma follows.
\maybeqed\end{proof}

Having computed $T$, the function $\compid$
allows us to access the component of some vertex $v$ in $G_c$
in time $O(\log t)$.
However, as we now show, this can be speeded up to $O(\log n)$ time.
Recall that $t = 2^B$.
Let $2^D$ be the smallest power of $2$ such that $2^D\geq n$
and fix some $k\in[0,2^{B-D})$.
Then, for each $c\in[k\cdot2^D+1,(k+1)\cdot 2^D]$,
the call $\compid(\lar_{[0,\infty]}[v],1,t,c)$ descends
down $T$ through the first $B-D$ levels
in the same way, independent of $c$.
We can thus add another preprocessing phase, building
the table $\scut$.
For a vertex $v$ and $0 \leq k < 2^{B-D}$, $\scut[v][k]$ is defined to be a pair $(s,P)$ such that
for $c \in[k\cdot2^D+1,(k+1)\cdot 2^D]$, $\compid(\lar_{[0,\infty]}[v], 1, t, c)$, after going through at most $B-D$ levels of $T$,
ends up in the interval $P$ and $s\in V(S_P)$ represents $v$.
There are only $O(t/n)$ allowed values of $k$,% and $n$ vertices,
so the table $\scut$ has size $O(t)$.

The table can be computed by finding the components
of each vertex $v$ in all the graphs $S_P$ from
the first $B-D$ levels of the tree.
As the component of $v$ in~$S_P$ can be computed in
constant time based on the component of $v$ in $S_{\ipar(P)}$,
we spend $O(t/n)$ time for each $v$, and thus $O(t)$ time in total.

The optimized procedure $\compid$ starts by looking
up the shortcut through first $B-D$ levels of
$T$ and then calls the original $\compid$, starting at
an elementary interval of length $O(n)$.
Thus, its running time is $O(\log n)$.

\subsection{2-edge-connectivity}
As in the case of connectivity, we first show how to preprocess the graph in order to efficiently answer 2-edge-connectivity queries regarding individual versions.
Our approach is similar to the idea of Section~\ref{s:access}:
we construct a data structure~$T$ containing graphs $S_{[a,b]}$,
where $[a,b]$ is an elementary interval.
Note that in the case of connectivity, the graphs $S_{[a,b]}$ do not contain any edges.

First, observe that contracting 2-edge-connected components yields a forest.

\begin{lemma}
Let $W$ be the set of 2-edge-connected components of some graph $G$.
Define the graph $H=(W,F)$, where
$$F=\{(w_1,w_2) : (u,v)\in E(G), u\in w_1, v\in w_2, w_1\neq w_2\}.$$
Then, $H$ is a forest.
\end{lemma}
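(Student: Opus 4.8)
The plan is to show that $H$ is acyclic by contradiction: assume $H$ contains a cycle, and derive that the 2-edge-connected components lying on that cycle would actually all belong to a single 2-edge-connected component of $G$, contradicting the fact that they are distinct vertices of $H$. First I would recall the standard characterization: two vertices $x,y$ of $G$ lie in the same 2-edge-connected component iff there is no bridge separating them, equivalently iff every edge of $G$ lies on some $x$--$y$ path, equivalently (by Menger for edge-connectivity) iff there are two edge-disjoint $x$--$y$ paths, \emph{or} $x=y$. I would work with the bridge formulation, since it interacts most cleanly with the contraction.

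Next I would set up the cycle. Suppose $w_1, w_2, \ldots, w_k, w_1$ is a cycle in $H$ with $k \geq 2$ (note $H$ has no loops by the $w_1 \neq w_2$ condition in the definition of $F$, and a ``cycle'' of length $2$ would require a multi-edge; I should be slightly careful here, because $H$ as defined is a multigraph — if two distinct edges of $G$ join $w_1$ and $w_2$, that already gives a $2$-cycle). So the real claim to prove is: if $e_1 = (u_1,v_1)$ and $e_2 = (u_2,v_2)$ are two distinct edges of $G$ whose endpoints, after contracting 2-edge-connected components, form a cycle in $H$ (possibly $k=2$), then $e_1$ is not a bridge of $G$. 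Concretely, for a length-$k$ cycle $w_1 \to w_2 \to \cdots \to w_k \to w_1$ realized by edges $e_1, \ldots, e_k$ of $G$, I would build a closed walk in $G$ that traverses $e_1$: inside each component $w_i$ there are two edge-disjoint paths (or the relevant endpoints coincide) connecting the landing point of $e_{i-1}$ to the departure point of $e_i$, and concatenating these with $e_1, \ldots, e_k$ yields a closed walk through $e_1$. From a closed walk containing $e_1$ one extracts a cycle of $G$ containing $e_1$, so $e_1$ is not a bridge.

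Then comes the key step, which I expect to be the main obstacle: concluding that all of $w_1, \ldots, w_k$ are in fact the same 2-edge-connected component of $G$, which contradicts their being distinct vertices of $H$. The subtlety is that ``no single edge on the walk is a bridge'' is weaker than ``the whole walk lies in one 2-edge-connected component'' — I need that every edge joining consecutive pieces, and every edge internal to the pieces, is bridge-free, and moreover that bridge-freeness of each edge along a closed walk forces all vertices of that walk into one 2-edge-connected component. I would handle this by the cleaner route: show directly that for any two vertices $x, y$ appearing on the closed walk $\Gamma$ I constructed, $x$ and $y$ have two edge-disjoint connecting paths (split $\Gamma$ at $x$ and at $y$ into two arcs). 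Hence all vertices of $\Gamma$ — in particular a representative vertex of each $w_i$ — lie in one 2-edge-connected component of $G$. But distinct $w_i$ are, by construction, distinct 2-edge-connected components, so $k$ cannot be $\geq 2$ (and the multi-edge case $k=2$ is likewise excluded). Therefore $H$ has no cycle, i.e. $H$ is a forest. The one routine point to verify carefully is the base case where some consecutive components are joined at a shared vertex or the ``path inside a component'' degenerates to a single vertex; in that case the two-edge-disjoint-paths claim inside the component is vacuous and the concatenation still goes through.
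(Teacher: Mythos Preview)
Your approach is essentially the same as the paper's: assume a cycle $w_1 w_2 \cdots w_k w_1$ in $H$ and argue that $w_1,\ldots,w_k$ would then lie in a single 2-edge-connected component of $G$, a contradiction. The paper states this in a single sentence without further justification, whereas you spell out the construction of the closed trail in $G$ and the two-edge-disjoint-paths argument; your extra care about the multigraph case is unnecessary, since two distinct $G$-edges between $w_1$ and $w_2$ would already force $w_1=w_2$.
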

\begin{proof}
Indeed, if there was a cycle $w_1w_2\ldots w_kw_1$ in $H$, then the
components $w_1,\ldots,w_k$ would form a single 2-edge-connected component.
\maybeqed\end{proof}

In the case of 2-edge-connectivity, the graphs $S_{[a,b]}$ are forests of rooted trees, whose vertices
represent some of the 2-edge-connected components of $G_{[a,b]}$.
Each rooted tree in the forest represents a part of some (ordinary) connected
component incident to some edges alive in the time interval $[a,b]$.

The vertices of $S_{[a,b]}$ are partitioned into two categories.
A vertex $s \in V(S_{[a,b]})$ is a \emph{simple vertex} if and only if it represents a
single 2-edge-connected component of $G_{[a,b]}$.
Otherwise, $s$ is called a \emph{path vertex} and it represents
$k$ ($k\geq 2$) 2-edge-connected components of $G_{[a,b]}$ ---
$c_1,\ldots,c_k$ --- that form a ``path'', i.e. for each
$i\in[1,k)$ there is a single edge in $G_P$ connecting some
vertex of $c_i$ and some vertex of $c_{i+1}$.
We maintain the following invariants.
\begin{enumerate}
\item If $s$ is a root of its tree, or its degree in $S_P$ is other than $2$,
then it is a simple vertex.
\item A path vertex is never adjacent to another path vertex.
\end{enumerate}
In particular, each path vertex is of degree $2$ in $S_{[a,b]}$.
Let $s$ be a path vertex representing a path $c_1,c_2,\ldots,c_k$ of
2-edge-connected components of $G_{[a,b]}$.
If $s_1$ is a parent of $s$ and $s_2$ is a child of $s$,
then the edge $(s_1,s)\in E(S_{[a,b]})$ is actually an edge
between components $s_1$ and $c_1$, while the edge $(s,s_2)\in E(S_{[a,b]})$
actually means $(c_k,s_2)$.

The components $c_1,c_2,\ldots,c_k$ of a path vertex $s\in V(S_{[a,b]})$
have the following property: for each $x\in [a,b]$, either
$c_1,c_2,\ldots,c_k$ are actual 2-edge-connected components
of $G_x$ or they are all parts of a single larger 2-edge-connected
component of $G_x$.
Thus, the path vertex $s$ allows us to trace components $c_i$
in the descendants of $T_{[a,b]}$ in a uniform way.
This in turn will later allow us to keep the graphs $S_{[a,b]}$ small.

Recall that in the case of connectivity, the call $\compid(\lar_{[0,\infty]}[v],1,t,x)$
returns a pair $(s,Q)$ ($s\in V(S_Q)$), where $Q$ is the last
interval on the path $[1,t]\to[x,x]$ in the segment tree
such that $v$ is represented with $s$ in $V(S_Q)$.
The introduction of path vertices forces us to modify what $\compid$
does,
as we need to distinguish between distinct 2-edge-connected components
that are represented by the same path vertex.
Now for $x\in[a,b]$ and $k\in V(S_{[a,b]})$, we set $\compid(k,a,b,x)$
to be the pair $(s,Q)$, where $s$ is a representation of $k$ in~$S_Q$ and
$Q$ is the last interval on the path $[a,b]\to[x,x]$
such that $s$ is a simple vertex.
To access the 2-edge-connected component of $v\in V$ in version $G_x$,
we use the same call $\compid(\lar_{[0,\infty]}[v],1,t,x)$.

The tables $\lar_{[a,b]}$ and $\rar_{[a,b]}$ are defined analogously:
for $s\in V(S_{[a,b]})$, if $\lar_{[a,b]}[s]=\perp$, then $s$ is not represented
in $S_{\lson([a,b])}$.
Otherwise, $\lar_{[a,b]}[s]$ is a vertex of $S_{\lson([a,b])}$ representing $s$.
The table $\rar_{[a,b]}$ is defined analogously.
%We also store a boolean value $\pat_P[s]$ that tells us whether
%$s$ is a path vertex.

Having defined graphs $S_{[a,b]}$, we now describe how they can be computed.
As previously, we compute $S_{[a,b]}$ based on $S_{\ipar([a,b])}$, by performing
first the reduction and then the contraction.
The reduction is again performed only if $b-a+1<n$.

The reduction proceeds in phases.
The initial phases involve marking some nodes of $S_{\ipar([a,b])}$,
whereas the latter phases reduce the graph's size.
The path vertices never get marked; they can be instead
merged with other path and simple vertices, forming
``longer'' path vertices of $S_{[a,b]}$.

Let $C$ be again $\bigcup_{i=a+1}^b \Delta_i^+ \cup \bigcup_{i=a}^{b-1} \Delta_i^-$.
In the first phase we mark all the vertices of $S_{\ipar([a,b])}$
incident to edges in $E_{[a,b]}\cup C$.
It might be the case that for some original edge $(u,v)$,
$u$ and $v$ are already represented by the same $s$ in $S_{\ipar([a,b])}$ ---
then we just skip this edge.
As a result, no more than $2|E_{[a,b]}\cup C|$ vertices of $S_{\ipar([a,b])}$ are marked.

In the second phase we mark all the lowest common ancestors of marked vertices,
that is, the vertices $s$ such that in the first phase,
the vertices from at least two distinct subtrees rooted
at children of $s$, were marked.
The common ancestors can be marked in linear time, using post-order
traversal ---
we only need to store for each vertex $s$, whether any element
of the subtree rooted at $s$ was marked in the first phase.
Additionally, we mark the root of every tree with
at least one marked vertex.

Let us count the vertices marked after the second phase.
Remove the subtrees with no marked vertices.
Let $q$ be the number of marked vertices of degree $2$.
If we replace all the degree 2 vertices with edges, we obtain a forest,
where every vertex that is neither the leaf nor the root, has
at least $2$ children.
Denote by $l$ the number of leaves in this forest.
Clearly, it has at most $2l$ vertices.
However, every leaf could be marked only in the first phase
and hence $l+q\leq 2|E_{[a,b]}\cup C|$, so $2l+q\leq 4|E_{[a,b]}\cup C|$.

\begin{corollary}
  After the second phase of the reduction, at most $4|E_{[a,b]}\cup C|$ vertices
  of $S_{\ipar([a,b])}$ are marked.
\end{corollary}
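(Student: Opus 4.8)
The plan is to bound the first-phase and second-phase marks separately, using the tree structure to control the second-phase ones. Write $N=|E_{[a,b]}\cup C|$ for brevity. In the first phase each edge of $E_{[a,b]}\cup C$ contributes at most its two endpoints after mapping to $S_{\ipar([a,b])}$ (and none if the two endpoints coincide there), so at most $2N$ vertices are marked in the first phase. The goal is to show that the extra marks of the second phase do not blow this up by more than a factor of two.

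For the second phase I would pass to the subforest $\mathcal{F}$ of $S_{\ipar([a,b])}$ obtained by discarding every subtree that contains no marked vertex; equivalently, $\mathcal{F}$ has the marked vertices as its node set, the parent of a marked vertex in $\mathcal{F}$ being its nearest marked ancestor. First note that every tree of $\mathcal{F}$ contains a vertex marked in the first phase: the second phase marks only lowest common ancestors of first-phase-marked vertices and roots of trees that already contain a marked vertex, so a tree with no first-phase mark would stay entirely unmarked. Next, suppress from $\mathcal{F}$ every non-root vertex having exactly one child in $\mathcal{F}$ (replacing it by an edge between its parent and its child), iterating to a fixpoint; call the resulting forest $\mathcal{F}'$, let $q$ be the number of suppressed vertices and $l$ the number of leaves. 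Suppression preserves the number of children of every surviving vertex and creates no new leaves, so in $\mathcal{F}'$ every vertex that is neither a leaf nor a root has at least two children; hence $|V(\mathcal{F}')|\le 2l$, and therefore $|V(\mathcal{F})|=|V(\mathcal{F}')|+q\le 2l+q$. Since $|V(\mathcal{F})|$ is exactly the number of marked vertices, it remains to prove $l+q\le 2N$.

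The key observation is that every leaf of $\mathcal{F}'$ and every suppressed vertex was already marked in the first phase. A leaf of $\mathcal{F}'$ is a leaf of $\mathcal{F}$, hence has no marked descendant and so cannot be a lowest common ancestor of two first-phase-marked vertices; and if such a leaf is moreover the root of its tree, it is the unique marked vertex there, so (by the previous paragraph) it is itself a first-phase mark. A suppressed vertex has exactly one parent and one child in $\mathcal{F}$, so it is neither a root nor a vertex with marked descendants in two distinct child subtrees, hence neither second-phase rule applies to it and it too was marked in the first phase. Consequently $l+q$ is bounded by the number of first-phase-marked vertices, i.e. $l+q\le 2N$. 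Combining, the total number of marked vertices is at most $2l+q=l+(l+q)\le 2N+2N=4N=4|E_{[a,b]}\cup C|$, which is the claim.

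The only delicate point is the structural bookkeeping in the second paragraph: one must check carefully that neither second-phase marking rule (LCA of first-phase marks, or root of a tree containing a mark) can ever apply to a leaf or to a non-root one-child vertex of $\mathcal{F}$, and handle the degenerate trees of $\mathcal{F}$ consisting of a single marked vertex. Everything else reduces to the elementary fact that a rooted forest whose internal non-root vertices all branch has at most twice as many vertices as leaves.
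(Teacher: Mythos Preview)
Your argument is correct and mirrors the paper's own reasoning: pass to the forest of marked vertices, suppress the non-root one-child vertices, bound the remainder by $2l$ via the standard branching count, and use that every leaf and every suppressed vertex could only have been marked in the first phase, yielding $l+q\le 2N$ and hence $2l+q\le 4N$. One small slip worth fixing: your ``equivalently'' is not actually an equivalence --- discarding every subtree with no marked vertex can still leave unmarked vertices behind (precisely the non-root one-child vertices untouched in phase one), so that pruned subforest need not coincide with the virtual forest on marked vertices with nearest-marked-ancestor as parent --- but since the rest of your argument only uses the latter description, the proof goes through unchanged.
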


The third reduction phase removes the subtrees with no previously
marked vertices.
All the 2-edge-components represented by vertices from those components
look exactly the same in $G_{[a,b]}$ as well as in all the individual
versions $G_a,\ldots,G_b$ and thus need not be tracked
in the descendants of $T_{[a,b]}$.

In the last phase we replace every remaining path of degree
2 unmarked vertices with a single path vertex.
These vertices may include both simple and path vertices.
However, neither of them has been marked, so for each $x\in[a,b]$,
the underlying path of 2-edge-connected components $c_1,\ldots,c_g$
either remains unaltered in $G_x$ or is a part of a single,
larger 2-edge-component in $G_x$.

Since the number of such paths does not exceed the number
of vertices marked so far, we end up with a forest $S'$
of at most $8|E_{[a,b]}\cup C|$ vertices.

Each phase of the reduction can be implemented as a simple graph search,
so the reduction takes time $O(|V(S_{\ipar([a,b])})|+|E_{[a,b]}|+|C|)$.

After the reduction comes the contraction.
We extend the forest $S'$ with the edges
$E_{[a,b]}$ alive in each of $G_a,\ldots,G_b$.
We merge the 2-edge-connected components found in this graph
into new, simple vertices, obtaining a new graph~$S''$,
which is again a forest.
It may happen that some vertices of $S'$ have not been merged
into larger components in $S''$.
Every such vertex $s\in V(S')$ is a path vertex in $S''$ iff
it is a path vertex in $S'$.
The roots of trees of $S''$ are chosen arbitrarily,
but keeping in mind that the trees should not be rooted
at path vertices.
The properly rooted $S''$ forms our graph $S_P$.
Contraction can be implemented to work in time
$O(|V(S_{\ipar([a,b])})|+|E_{[a,b]}|)$.

Let us bound the time needed to compute $S_{[a,b]}$.
If $b-a+1\geq n$, then the reduction is skipped and thus
the time spent on building $S_{[a,b]}$ is $O(n+|E_{[a,b]}|)$.
Otherwise, the reduction is performed and we spend
$O(|V(S_{\ipar([a,b])})|+|C|+|E_{[a,b]}|)$ time.

The asymptotic running time of building $S_{[a,b]}$
turns out to be exactly the same as in the case of connectivity.
Thus, building a data structure $T$ for representing 2-edge-connectivity
takes the same time.

\begin{corollary}
We can build a data structure $T$ representing 2-edge-connectivity in
a graph timeline in $O(m+t\log{n})$ time.
The space usage is $O(t\log{n})$.
\end{corollary}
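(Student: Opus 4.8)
The plan is to mirror the analysis already carried out for the connectivity version of $T$, since the excerpt establishes that every building block has the same asymptotic cost. First I would recall the running-time recurrence from the connectivity construction: the work done at node $T_{[a,b]}$, excluding recursive calls, is $O(n + |E_{[a,b]}|)$ when $b-a+1 \geq n$ and $O(|V(S_{\ipar([a,b])})| + |C| + |E_{[a,b]}|)$ when $b-a+1 < n$, where $C = \bigcup_{i=a+1}^{b}\Delta_i^+ \cup \bigcup_{i=a}^{b-1}\Delta_i^-$. The text has just verified that each reduction phase (marking incident vertices, marking LCAs and roots, pruning unmarked subtrees, collapsing degree-2 paths into path vertices) runs as a constant number of linear-time graph searches, and that contraction — extending $S'$ by the edges of $E_{[a,b]}$ and merging 2-edge-connected components — also runs in linear time. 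So the per-node bounds are literally the same expressions as before.

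The key step is then to re-derive the analogue of Lemma~\ref{lem:short_sp} for the 2-edge-connectivity forests, namely that $|V(S_{[a,b]})| \leq \min(c\,(b-a+1), n)$ for an absolute constant $c$ when $b-a+1 < n$. The excerpt already does the crucial combinatorial step: after the second reduction phase at most $4|E_{[a,b]}\cup C|$ vertices are marked (the leaf/branching bound $2l + q \leq 4|E_{[a,b]}\cup C|$), and after collapsing degree-2 paths the reduced forest $S'$ has at most $8|E_{[a,b]}\cup C|$ vertices; contraction only shrinks this. Combining with the bound $|E_{[a,b]} \cup C| \leq |C_1| \leq 4(b-a+1)$ for the parent interval — which carries over verbatim from the proof of Lemma~\ref{lem:short_sp} because it only uses $|\Delta_i^\pm| \leq 1$ and $E_{[a,b]} \subseteq C_1$, facts independent of what the $S_P$ graphs actually store — gives $|V(S_{[a,b]})| = O(b-a+1)$, hence $|V(S_{\ipar([a,b])})| = O(b-a+1)$ as well.

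Finally I would sum the per-node costs exactly as in the connectivity proof. By Lemma~\ref{l:part}, $\sum_P |E_P| = O(m + t\log n)$, disposing of the $|E_{[a,b]}|$ terms. For intervals with $b-a+1 \geq n$ the residual $O(n)$ cost is incurred only $O(t/n)$ times, totalling $O(t)$. For intervals with $b-a+1 < n$, the bound just derived makes the residual cost $O(b-a+1)$, so summing over one level of the segment tree gives $O(t)$, and there are $O(\log n)$ such levels, for $O(t\log n)$ total. Adding the initial $O(m+t)$ to compute the $L(e)$ sets yields $O(m + t\log n)$ overall, and the space bound $O(t\log n)$ follows because each table stored at a node is linear in $|V(S_P)| + |E_P|$ and these sums telescope to $O(m + t\log n)$ — matching the connectivity case. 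I expect the only genuinely new work is checking that the path-vertex bookkeeping in the reduction really does stay linear and that collapsing degree-2 chains cannot blow up the vertex count; everything downstream is a transcription of the connectivity argument, so the main obstacle is being careful that the invariants on path vertices (never adjacent to each other, never a tree root, always degree 2) are maintained by the contraction step without extra cost.
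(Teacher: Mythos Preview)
Your proposal is correct and follows precisely the route the paper takes: the paper's ``proof'' is literally the one-line remark that ``the asymptotic running time of building $S_{[a,b]}$ turns out to be exactly the same as in the case of connectivity,'' and you have simply unpacked that remark by re-verifying the analogue of Lemma~\ref{lem:short_sp} (with the constant $8$ replaced by a larger absolute constant coming from the $8|E_{[a,b]}\cup C|$ bound on the reduced forest) and then repeating the level-by-level summation verbatim. One minor tightening: for the space bound you need only sum $|V(S_P)|$ over all nodes, since the stored objects (the rooted forests $S_P$ and the $\lar/\rar$ tables) are all linear in $|V(S_P)|$; this gives $O(t\log n)$ directly, without the extra $m$ term your last sentence suggests.
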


The optimization allowing the evaluation of $\compid$ in time
$O(\log{n})$ applies here as well.

\section{Answering $\fa$ queries}\label{sec:forall}
In this section we show how to extend the data structure $T$,
so that it can be used for answering $\fa$ queries.
The preprocessing for $\fa$ queries constitutes another
phase, that we apply only after we computed the
data structure $T$.

Let us begin with a simple observation.
Assume that we want to answer a $\fa(u, w, a, b)$ query, where
$[a,b]$ is an elementary interval.
Then, if the same vertex of $S_{[a,b]}$ represents both $u$ and $w$,
then there is actually a path between $u$ and $w$ in $G_{[a,b]}$ and
we can immediately give a positive answer.
However, the reverse relation is not true.
It may happen that $u$ and $w$ are represented by distinct vertices in $S_{[a,b]}$,
but are connected in each of $G_a, \ldots, G_b$.
Thus, our first goal in this section is to compute, for each two vertices
in each of $S_{[a,b]}$, whether the vertices represented by them are connected
in each of $G_a, \ldots, G_b$.

For an elementary interval $[a,b]$, let
$c_{[a,b]}(s,x)$, where $s\in V(S_{[a,b]})$, $x\in[a,b]$, be the
result of the call $\compid(s,a,b,x)$.
Our goal is to compute for each vertex $s\in S_{[a,b]}$ a \emph{fingerprint},
that is, an integer $H_{[a,b]}(s)\in [1,|V(S_{[a,b]})|]$ with the following property:
the sequences 
$c_{[a,b]}(s,a)c_{[a,b]}(s,a+1)\ldots c_{[a,b]}(s,b)$ and $c_{[a,b]}(s',a)c_{[a,b]}(s',a+1)\ldots c_{[a,b]}(s',b)$ are equal
iff $H_{[a,b]}(s)=H_{[a,b]}(s')$.

To answer a $\fa(u, v, a, b)$ query, where $[a,b]$ is an elementary interval, we first map $u$ and $v$ into vertices $u'$ and $v'$ of $S_{[a,b]}$ and then report a positive answer iff $H_{[a,b]}(u') = H_{[a,b]}(v')$.
In order to handle arbitrary intervals, we decompose the query interval into $O(\log t)$ elementary intervals.
The decomposition as well as the mapping can be implemented as
a function $\faint(s_1,s_2,x,y,a,b)$, whose pseudocode is given in Appendix~\ref{a:code}.
To answer a $\fa(u,v,x,y)$ query we execute $\faint(\lar_{[0,\infty]}[v],\lar_{[0,\infty]}[w],x,y,1,t)$.

Let us now describe the computation of fingerprints.
They are computed in a bottom-up fashion, starting from the leaves of $T$.

\begin{lemma}\label{l:hash}
  Let $P=[a,b]$ be an elementary interval and $s\in V(S_{P})$. Define:
  $$\tilde{H}_{P}(s)=\begin{cases}
  (s, 0) & \mbox{if } \lar_{P}(s)=\perp \textrm{ or } \rar_{P}(s)=\perp \\
    (H_{\lson(P)}(\lar_{P}[s]),H_{\rson(P)}(\rar_{P}[s])) & \mbox{otherwise}.
\end{cases}$$
Then $c_P(s_1,a)\ldots c_P(s_1,b)=c_P(s_2,a)\ldots c_P(s_2,b)$ iff $\tilde{H}_P(s_1)=\tilde{H}_P(s_2)$.
\end{lemma}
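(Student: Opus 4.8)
The plan is to prove the two directions separately, using the recursive structure of the segment tree and the fact that the interval $[a,b]$ splits into $\lson(P)=[a,\imid]$ and $\rson(P)=[\imid+1,b]$. The key observation I would establish first is a relation between $c_P(s,x)$ and $c_{\lson(P)}(\lar_P[s],x)$ for $x\le\imid$ (and symmetrically for $\rar_P$ when $x>\imid$): when $\lar_P[s]\ne\perp$, following $\compid(s,a,b,x)$ takes one step from $(s,P)$ to $(\lar_P[s],\lson(P))$ and then proceeds identically to $\compid(\lar_P[s],a,\imid,x)$, so $c_P(s,x)=c_{\lson(P)}(\lar_P[s],x)$; when $\lar_P[s]=\perp$, the traversal stops immediately and $c_P(s,x)=(s,P)$ for every $x\in[a,b]$. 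This is essentially unfolding the definition of $\compid$ one level, and I expect it to be the technical heart of the argument.

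Given that, the case analysis is clean. First suppose $\tilde H_P(s_1)=\tilde H_P(s_2)$. If both are of the form $(s_i,0)$, then $s_1=s_2$, so trivially the two sequences coincide. If they are in the $(s_i,0)$ case, note that $\lar_P[s_i]=\perp$ or $\rar_P[s_i]=\perp$ forces the whole sequence $c_P(s_i,a)\ldots c_P(s_i,b)$ to be the constant $(s_i,P)$ — here I would remark that by Lemma~\ref{lem:comp-id} (or its proof) a vertex on which a pointer is $\perp$ already represents a component stable across the whole interval $P$, so $\compid$ never descends. If instead both are in the "otherwise" branch, then $H_{\lson(P)}(\lar_P[s_1])=H_{\lson(P)}(\lar_P[s_2])$ and $H_{\rson(P)}(\rar_P[s_1])=H_{\rson(P)}(\rar_P[s_2])$; by the defining property of $H$ on the children (the inductive hypothesis that the fingerprint $H$ on each child node already certifies equality of the corresponding $c$-sequences over that child's interval), the sequences $c_{\lson(P)}(\lar_P[s_1],\cdot)$ and $c_{\lson(P)}(\lar_P[s_2],\cdot)$ agree on $[a,\imid]$ and likewise the $\rar$-sequences agree on $[\imid+1,b]$; concatenating and applying the one-level unfolding gives $c_P(s_1,x)=c_P(s_2,x)$ for all $x\in[a,b]$. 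One subtlety to handle: the $(s_i,0)$ case and the "otherwise" case cannot be mixed, since the constant-sequence value $(s_i,P)$ has second coordinate an interval containing $[a,b]$ strictly (it equals $P$), whereas in the otherwise branch each $c_P(s,x)$ has second coordinate a proper descendant of $P$; so $\tilde H_P(s_1)=\tilde H_P(s_2)$ with one in each branch is impossible, and even if one only compares the sequences, they cannot be equal across a branch boundary.

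For the converse, suppose $c_P(s_1,a)\ldots c_P(s_1,b)=c_P(s_2,a)\ldots c_P(s_2,b)$. If the sequence for $s_1$ is constant equal to $(s_1,P)$ (equivalently $\lar_P[s_1]=\perp$ or $\rar_P[s_1]=\perp$), then reading off the first coordinate of any entry forces $s_2=s_1$ and hence $\tilde H_P(s_1)=(s_1,0)=\tilde H_P(s_2)$; and as argued above the non-constant case for $s_2$ would give second coordinates that are proper descendants of $P$, contradicting equality, so both are in the same branch. In the remaining case both sequences are non-constant, so $\lar_P[s_i]\ne\perp\ne\rar_P[s_i]$ for $i=1,2$; restricting the equality of sequences to $x\in[a,\imid]$ and using the unfolding gives $c_{\lson(P)}(\lar_P[s_1],\cdot)=c_{\lson(P)}(\lar_P[s_2],\cdot)$ on $[a,\imid]$, whence $H_{\lson(P)}(\lar_P[s_1])=H_{\lson(P)}(\lar_P[s_2])$ by the child's fingerprint property, and symmetrically on the right, so the two pairs $\tilde H_P(s_1),\tilde H_P(s_2)$ coincide. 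The main obstacle, as noted, is pinning down the one-level unfolding lemma for $\compid$ precisely — in particular being careful that when the traversal stops early the returned pair is literally $(s,P)$ independent of $x$ — after which everything is a mechanical case split.
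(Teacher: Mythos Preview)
Your overall approach matches the paper's: one-level unfolding of $\compid$, then a case split on whether a pointer is $\perp$, using the child fingerprints in the ``otherwise'' branch. However, there is a genuine gap in your treatment of the $\perp$ case. You claim that $\lar_P[s]=\perp$ or $\rar_P[s]=\perp$ forces the \emph{entire} sequence $c_P(s,a)\ldots c_P(s,b)$ to be the constant $(s,P)$, and you use this both to argue the forward direction and to set up the converse case split (``constant vs.\ non-constant''). This equivalence is false: the two pointers are independent. If $\lar_P[s]=\perp$ but $\rar_P[s]\neq\perp$, then $c_P(s,x)=(s,P)$ for $x\le\imid$, but for $x>\imid$ the traversal descends into $\rson(P)$ and returns pairs whose second coordinate is a proper descendant of $P$. (Concretely: take a single edge alive only at one time point in $\rson(P)$; its endpoints are marked when building $S_{\rson(P)}$ but not when building $S_{\lson(P)}$.) Consequently your converse case split misses the mixed case where exactly one pointer is $\perp$, and your claim that ``both sequences non-constant'' implies $\lar_P[s_i]\ne\perp\ne\rar_P[s_i]$ is unjustified.

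The paper avoids this by never asserting the sequence is constant. Instead, when $\lar_P[s_1]=\perp$ it looks only at the single value $c_P(s_1,a)=(s_1,P)$ and observes that no $s_2\neq s_1$ can have $c_P(s_2,a)=(s_1,P)$ (either $\lar_P[s_2]=\perp$ gives $(s_2,P)$, or $\lar_P[s_2]\neq\perp$ gives a pair whose interval lies strictly inside $\lson(P)$). Symmetrically, $\rar_P[s_1]=\perp$ is handled via $c_P(s_1,b)$. This pins down $s_1=s_2$ in all $\perp$ cases using a single coordinate of the sequence, and the ``otherwise'' branch is then genuinely the case where \emph{both} pointers of \emph{both} $s_1,s_2$ are non-$\perp$, where your unfolding argument goes through. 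Your proof is easily repaired along these lines; the key correction is to argue from one endpoint value rather than from constancy of the whole sequence.
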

\begin{proof}
  If $\lar_P[s_1]=\perp$, then $c_P(s_1,a)=(s_1,P)$ and for each $s_2\in V(S_P)$
  such that $s_1\neq s_2$, we have $c_P(s_2,a)\neq (s_1,P)$.
  The pair $\tilde{H}_P(s_1)=(s_1,0)$ is unique among the pairs $\tilde{H}_P(s)$,
  so we have $c_P(s_1,a)\ldots c_P(s_1,b)=c_P(s_2,a)\ldots c_P(s_2,b)$
  iff $s_1=s_2$.
  Analogously, if $\rar_P[s_1]=\perp$, then $c_P(s_1,b)\neq c_P(s_2,b)$ for
  $s_1\neq s_2$ and thus $s_1$ is given a unique pair $\tilde{H}_P(s_1)=(s_1,0)$.
  Therefore, if $\lar_P[s_1]=\perp$ or $\rar_P[s_1]=\perp$, then $\tilde{H}_P(s_1)=\tilde{H}_P(s_2)$
  is equivalent to $s_1=s_2$.

  It remains to consider the case, when $s_1$ and $s_2$ are represented
  in both $S_{\lson(P)}$ and $S_{\rson(P)}$.
  Let $m=\lfloor (a+b)/2\rfloor$.
  By the definition of the pair $\tilde{H}_{P}(s)$ we have  $c_P(s_i,a)\ldots c_P(s_i,m)=$
  $c_{\lson(P)}(\lar_P[s_i],a)\ldots c_{\lson(P)}(\lar_P[s_i],m)$
  as well as
  $c_P(s_i,m+1)\ldots c_P(s_i,b)=c_{\rson(P)}(\rar_P[s_i],m+1)\ldots c_{\rson(P)}(\rar_P[s_i],b)$.
  The sequences $c_P(s_1,a)\ldots c_P(s_1,b)$ and $c_P(s_2,a)\ldots c_P(s_2,b)$,
  are equal exactly when their corresponding halves are equal, that
  is, by the definition of fingerprints $H$, iff
  $H_{\lson(P)}(\lar_P[s_1])=H_{\lson(P)}(\lar_P[s_2])$
  and $H_{\rson(P)}(\rar_P[s_1])=H_{\rson(P)}(\rar_P[s_2])$.
\maybeqed\end{proof}

Observe that the pairs $\tilde{H}_P(s)$ from the above lemma
satisfy the desired properties of fingerprints,
with the exception that they are pairs of integers, not integers.
Thus, in order to compute the values $H_P(s)$, it suffices to map the values of $\tilde{H}_P(s)$
into distinct positive integers (two pairs are assigned the same integer iff they are equal).
As both numbers in each pair $\tilde{H}_P(s)$ are at most $O(|V{(S_P)}|)$ we may compute the mapping
in linear time by using radix-sort algorithm.
Note that this resembles the Karp-Miller-Rosenberg \cite{Karp:1972} algorithm.
The total additional time and space used is $O(\sum_P |V{(S_P)}|) = O(t\log n)$.
Thus, we obtain an $\langle O(m+t\log n), O(\log t)\rangle$ data structure for
answering $\fa$ queries.

However, the query time can be made independent of the length of the timeline and speeded up to $O(\log n)$.
In order to do that, we employ a shortcutting technique similar to the one
used for finding connected components of vertices in individual graphs
combined with a data structure for comparing the
subwords of a given word.

Assume again that $D$ is the smallest integer such that $2^D\geq n$.
Observe that the $\scut$ table from Section~\ref{sec:tree_structure}
allows us to speed up $\fa(u,v,x,y)$, where
$[x,y]\subseteq [k\cdot2^D+1, (k+1)\cdot 2^D]$ for
some $k\in[0,2^{B-D})$.
Let $\scut[u][k]=(s_u,P_u)$ and $\scut[v][k]=(s_v,P_v)$.
We only need to perform the following steps:
\begin{enumerate}
\item If $P_u\neq P_v$, then the answer is $\textbf{false}$.
\item If $P_u=P_v$, but the length of $P_u$ is more than $2^D$, then the answer is \textbf{true} 
  if and only if $s_u=s_v$.
\item Otherwise, $P_u=P_v = [k\cdot 2^D+1, (k+1)\cdot2^D]$ and the answer can be obtained by calling $\faint(s_u,s_v,x,y,k\cdot2^D+1,(k+1)\cdot2^D)$.
\end{enumerate}
Only the third steps takes superconstant time, namely $O(\log 2^D)=O(\log n)$.

Consider the general query $\fa(u,v,x,y)$, where
$[x,y]\not\subseteq [k\cdot2^D+1, k\cdot (2^D+1)]$, for any $k$.
Let $l_1$ be the smallest integer such that $x< l_1\cdot2^D+1$ and $l_2$
be the largest integer such that $l_2\cdot 2^D<y$.
Then, our query can be split into the conjunction of three
queries: $\fa(u,v,x,l_1\cdot 2^D)$, $\fa(u,v,l_2\cdot 2^D+1,y)$
and $\fa(u,v,l_1\cdot2^D+1,l_2\cdot2^D)$ (we assume the
last query to be \textbf{true} if $l_1=l_2$).
The first two can be answered in $O(\log n)$ time, as discussed
above.
We deal with the third one in a different way.
Assume $l_1<l_2$.
For any~$l$, let $\scut[u][l]=(s_u,P_u)$ and $\scut[v][l]=(s_v,P_v)$.
Define $h_u^l:=(H_{P_u}(s_u),P_u)$.
Observe that the answer to $\fa(u,v,l\cdot 2^D+1,(l+1)\cdot 2^D)$
is affirmative exactly iff $h_u^l = h_v^l$, which follows immediately from the definition of fingerprints.
Thus, to answer $\fa(u,v,l_1\cdot2^D+1,l_2\cdot2^D)$,
we need two check if the sequences $h_u^{l_1}h_u^{l_1+1}\ldots h_u^{l_2-1}$
and $h_v^{l_1}h_v^{l_1+1}\ldots h_v^{l_2-1}$ are equal.
We use the following algorithm, described for instance in \cite{Crochemore:2007}.
It uses the linear construction of a suffix array and the optimal range minimum query structure.

\begin{lemma}[\cite{Crochemore:2007}]\label{l:text}
There exists a data structure that, after a linear preprocessing of
the word $W$, allows us to check in time $O(1)$ if two subwords of $W$ are
equal.
\end{lemma}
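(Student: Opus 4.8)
The plan is to reduce the comparison of two subwords to a single range minimum query over the longest‑common‑prefix array of the suffixes of $W$. First I would build the suffix array $\mathrm{SA}$ of $W$, i.e., the permutation listing all suffixes of $W$ in lexicographic order, together with its inverse table $\mathrm{rank}$, so that $\mathrm{rank}[i]$ is the position of the suffix $W[i..\,]$ in sorted order. This takes $O(|W|)$ time via the skew/DC3 algorithm of Kärkkäinen and Sanders. Next I would compute the array $\mathrm{LCP}$, where $\mathrm{LCP}[k]$ is the length of the longest common prefix of the two suffixes occupying positions $k-1$ and $k$ in $\mathrm{SA}$; Kasai's algorithm produces it in $O(|W|)$ time given $\mathrm{SA}$ and $\mathrm{rank}$.

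The key combinatorial fact is that for any two suffixes of $W$, say those starting at positions $i$ and $j$ with $\mathrm{rank}[i] < \mathrm{rank}[j]$, the length of their longest common prefix equals $\min\{\mathrm{LCP}[k] : \mathrm{rank}[i] < k \le \mathrm{rank}[j]\}$. Indeed, the suffixes with ranks between $\mathrm{rank}[i]$ and $\mathrm{rank}[j]$ are lexicographically sorted, so the common prefix length of consecutive ones can only decrease past a point and never recovers, whence the minimum over the block is exactly the common prefix length of the two endpoints. Hence, after additionally preprocessing $\mathrm{LCP}$ for range minimum queries — a classical construction with $O(|W|)$ preprocessing and $O(1)$ query time, e.g. via the reduction to $\pm 1$ RMQ and lowest common ancestor queries on the Cartesian tree of $\mathrm{LCP}$ — we can evaluate the longest common prefix of any two suffixes of $W$ in constant time.

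Finally, to decide whether $W[i..\,i+\ell-1]$ and $W[j..\,j+\ell-1]$ are equal, I would first reject if the two subwords have different lengths, and otherwise compute $L$, the longest common prefix of the suffixes starting at $i$ and $j$, by one range minimum query, answering \textbf{true} iff $L \ge \ell$. Correctness is immediate: two words of equal length are equal iff their longest common prefix has length at least that common length, and the length‑$\ell$ prefix of $W[i..\,]$ is precisely $W[i..\,i+\ell-1]$. Every component is built in linear time and each query costs $O(1)$ arithmetic operations plus one RMQ, giving $O(|W|)$ preprocessing and $O(1)$ query time. The only technical weight sits in the two ingredients we invoke as black boxes — linear‑time suffix array construction and linear‑preprocessing constant‑time RMQ — both standard; see \cite{Crochemore:2007}.
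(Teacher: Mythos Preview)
Your proposal is correct and follows exactly the approach the paper alludes to in the sentence preceding the lemma: the linear-time suffix array construction together with the LCP array and a constant-time range minimum query structure. The paper itself does not give a proof but merely cites \cite{Crochemore:2007}; your write-up is a faithful expansion of that reference.
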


Let $X_v=h_v^0h_v^1\ldots h_v^{2^{B-D}-1}$ and let
$X$ be the concatenation $X_1X_2\ldots X_n$.
Notice that the length of $X$ is $O(t)$.
We build the data structure of Lemma~\ref{l:text} for the sequence $X$ in $O(t)$ time.
Hence, we can answer the query $\fa(u,v,l_1\cdot2^D+1,l_2\cdot2^D)$
by comparing the appropriate two subwords of $X$ of length $l_2-l_1$
in time $O(1)$.
\begin{theorem}
There exists an $\langle O(m+t\log n),O(\log n)\rangle$ data structure for answering $\fa$ queries.
\end{theorem}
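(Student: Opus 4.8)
The plan is to assemble the pieces developed through the section into a single proof. The overall strategy is: preprocess to build the tree $T$, attach fingerprints to it, then add one shortcut-based layer on top so that queries reduce to at most three subqueries, each answerable in $O(\log n)$ or $O(1)$ time.

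\begin{proof}
First I would recall that by the results of Section~\ref{sec:tree_structure}, the data structure $T$ together with the $\scut$ table can be built in $O(m+t\log n)$ time and uses $O(t\log n)$ space. Next, I add the fingerprint layer: computing all values $\tilde H_P(s)$ bottom-up via Lemma~\ref{l:hash}, then converting each level's pairs into integers $H_P(s)$ by radix sort, takes $O(\sum_P|V(S_P)|)=O(m+t\log n)$ time (using $\sum_P|V(S_P)|=O(t\log n)$ from Lemma~\ref{lem:short_sp}), and the correctness of these fingerprints as identifiers of equal component-sequences follows by induction on the height of $T$, with Lemma~\ref{l:hash} as the inductive step. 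Finally, I build the auxiliary word $X=X_1X_2\cdots X_n$ of length $O(t)$ and the data structure of Lemma~\ref{l:text} on $X$ in $O(t)$ time. This establishes the claimed $O(m+t\log n)$ preprocessing bound.

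For the query, given $\fa(u,v,x,y)$ I split the interval $[x,y]$ at the multiples of $2^D$ nearest to its endpoints, obtaining up to three subqueries: a ``prefix'' query on $[x,l_1\cdot 2^D]$, a ``suffix'' query on $[l_2\cdot 2^D+1,y]$, and a ``bulk'' query on $[l_1\cdot 2^D+1,l_2\cdot 2^D]$ (vacuously true when $l_1=l_2$). The original query is the conjunction of these, since a path $u\to v$ exists in every $G_i$, $i\in[x,y]$, iff it exists in every $G_i$ in each of the three subranges. The prefix and suffix queries each lie inside a single block $[k\cdot 2^D+1,(k+1)\cdot 2^D]$, so by the three-case analysis preceding the theorem (looking up $\scut[u][k]$ and $\scut[v][k]$, comparing the returned intervals, and in the remaining case invoking $\faint$) each is answered in $O(\log n)$ time; here $\faint$ decomposes its argument interval into $O(\log 2^D)=O(\log n)$ elementary intervals and compares the relevant fingerprints. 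The bulk query, by the definition of fingerprints, asks exactly whether the sequences $h_u^{l_1}\cdots h_u^{l_2-1}$ and $h_v^{l_1}\cdots h_v^{l_2-1}$ coincide; these are contiguous subwords of $X_u$ and $X_v$, hence subwords of $X$, so Lemma~\ref{l:text} answers it in $O(1)$ time. Summing the three contributions gives $O(\log n)$ query time.

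The main obstacle is purely one of bookkeeping rather than of ideas: one must verify that the interval decomposition is valid, that the $\scut$ entries referenced are exactly those computed during preprocessing, and that the fingerprints $H_{P_u}(s_u)$ used to form $h_u^l$ are well-defined for the (length-$2^D$ or longer) interval $P_u$ that $\scut[u][l]$ points to. All of these follow directly from the definitions and lemmas already in place, so no new technical difficulty arises.
\end{proof}
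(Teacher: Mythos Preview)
Your proposal is correct and follows essentially the same approach as the paper: build $T$ and $\scut$, compute fingerprints via Lemma~\ref{l:hash} with radix sort, precompute the word $X$ with the structure of Lemma~\ref{l:text}, and at query time split $[x,y]$ at the nearest multiples of $2^D$ into a prefix, a bulk, and a suffix subquery, handling the first two via $\scut$ and $\faint$ in $O(\log n)$ and the bulk via a single subword-equality test in $O(1)$. The only cosmetic difference is that you state the fingerprint cost as $O(m+t\log n)$ rather than the tighter $O(t\log n)$; this is harmless for the final bound.
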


\subsection{2-edge-connectivity}
As in the case of the connectivity relation, for each $s\in V(S_{[a,b]})$,
we want to encode the entire history of what happens with
$s$ in each of the individual versions $G_a,\ldots,G_b$.
Since we introduced path vertices in the graphs $S_{[a,b]}$,
the appropriate fingerprints need to be defined in a more subtle way.

We partition the vertices of a graph $S_{[a,b]}$ into three groups:
\begin{enumerate}
  \item simple vertices,
  \item \emph{vanishing} path vertices. If $s$ is a vanishing path
    vertex and it represents a path of 2-edge-connected components
    $c_1,\ldots,c_k$, then for each $x\in [a,b]$,
    all the components $c_1,\ldots,c_k$ are parts of a single,
    larger 2-edge-connected component $C_x$ in $G_x$,
  \item \emph{non-vanishing} path vertices. $s\in V(S_{[a,b]})$
    is a non-vanishing path vertex if there exists $x\in [a,b]$ such
    that the underlying 2-edge-connected components $c_1,\ldots,c_k$ are
    all actual 2-edge-connected components of $G_x$.
\end{enumerate}

Let us assume that $P=[a,b]$ and $s\in V(S_P)$.
Denote by $c_P(s,x)$ the result of\linebreak $\compid(s,a,b,x)$.
As in the case of connectivity we define $H_P(s) \in \{1,\ldots,|V(S_P)|\}\cup\{\perp\}$
to be the fingerprint of the sequence $c_P(s,a)\ldots c_P(s,b)$.
We set $H_P(s)=\perp$ only if $s$ is a non-vanishing
path vertex.
Otherwise, if $s$ is simple or a vanishing path vertex, $H_P(s)$
is an integer.

There is a reason why a non-vanishing path vertex is not assigned
an integer fingerprint: if vertices $v, w\in V$, $v\neq w$ are both represented by 
a non-vanishing path vertex $s$ in $S_P$,
then the sequences $c_{[1,t]}(\lar_{[0,\infty]}[v],a)\ldots c_{[1,t]}(\lar_{[0,\infty]}[v],b)$
and $c_{[1,t]}(\lar_{[0,\infty]}[w],a)\ldots c_{[1,t]}(\lar_{[0,\infty]}[w],b)$
might be different.

In order to define the fingerprints $H_P(s)$ based on the fingerprints
for the children intervals $\lson(P)$ and $\rson(P)$,
we first define the initial fingerprints $\tilde{H}_P(s)$.
The values $\tilde{H}_P(s)$ have the same properties as
the fingerprints $H_P(s)$ defined above, except that if
$\tilde{H}_P(s)\neq\perp$, then $\tilde{H}_P(s)$ is a pair
of integers from the range $[1,|V(S_P)|]$.
The initial fingerprints can be computed according to the following rules, which
implicitly decide whether a path vertex $s$ is vanishing or non-vanishing.
\begin{enumerate}
\item If $s$ is a simple vertex:
  \begin{itemize}
    \item if $\lar_P[s]=\perp$ or $\rar_P[s]=\perp$, then $\tilde{H}_P(s)=(s,0)$.
      The fingerprint has to be unique in this case.
    \item Let $s_l=\lar_P[s]$ and $s_r=\rar_P[s]$. If $H_{\lson(P)}[s_l]=\perp$
      or $H_{\rson(P)}[s_r]=\perp$, then $\tilde{H}_P(s)=(s,0)$.
      It is a consequence of $s$ representing exactly a single 2-edge-connected
      component of $G_x$, for some $x\in[a,b]$.
    \item Otherwise, $\tilde{H}_P(s)=(H_{\lson(P)}(s_l),H_{\rson(P)}(s_r))$.
  \end{itemize}
\item If $s$ is a path vertex:
  \begin{itemize}
    \item if $\lar_P[s]=\perp$ or $\rar_P[s]=\perp$, then $\tilde{H}_P(s)=\perp$,
    \item Let $s_l=\lar_P[s]$ and $s_r=\rar_P[s]$. If $H_{\lson(P)}[s_l]=\perp$
      or $H_{\rson(P)}[s_r]=\perp$, then $\tilde{H}_P(s)=\perp$.
    \item Otherwise, $\tilde{H}_P(s)=(H_{\lson(P)}(s_l),H_{\rson(P)}(s_r))$.
  \end{itemize}
\end{enumerate}

Again, we can use radix-sort to convert each value $\tilde{H}_P(s)$ (distinct from $\perp$)
to an integer $H_P(s)$ in the range $[1,|V(S_P)|]$.

We now sketch how to answer the query $\fea(u,v,x,y)$ in time $O(\log{t})$.
As in Section~\ref{sec:forall}, we reduce this problem to answering $O(\log{t})$ queries
with the time period being an elementary interval.
Let us focus on a single elementary interval $P$.
For $w\in\{u,v\}$, denote by $Q_w$ the last interval on the path $[1,t]\to P$ such that
vertex $w$ is represented in $S_{Q_w}$ by a vertex $s_w$.
Also, we denote by $Q_w'$ the last interval on path $[1,t]\to P$ such that
vertex $w$ is represented in $S_{Q_w'}$ by a \emph{simple} vertex $s_w'$.
Denote the quadruple $(Q_w,s_w,Q_w',s_w')$ by $\phi_P(w)$.

We have the following cases:
\begin{enumerate}
  \item $Q_u\neq Q_v$. The answer is clearly \textbf{false}.
  \item $Q_u=Q_v$, $Q_u\neq P$. As $s_u$ or $s_v$ could potentially be path vertices,
    the answer is positive if and only if $(Q_u',s_u')=(Q_v',s_v')$.
  \item $Q_u=Q_v=P$.
    \begin{enumerate}
      \item $H_P(s_u)\neq\perp$ and $H_P(s_v)\neq\perp$. The answer is \textbf{true}
        iff $H_P(s_u)=H_P(s_v)$.
      \item $H_P(s_u)=\perp$ or $H_P(s_v)=\perp$. The answer is the same as the result
        of the comparison $(Q_u',s_u')=(Q_v',s_v')$.
    \end{enumerate}
\end{enumerate}
Once we have the fingerprints, all the above checks can be performed in $O(1)$ time,
therefore we can answer $\fea$ queries in $O(\log{t})$ time.

In order to optimize the query time to $O(\log{n})$,
we adapt the technique used to speed up $\fa$ queries
for connectivity.
Assume again that $D$ is the smallest integer such that $2^D\geq n$.
First we show that we can answer the query
$\fea(u,v,x,y)$, where
$[x,y]\subseteq [k\cdot2^D+1, (k+1)\cdot 2^D]$ for
some $k\in[0,2^{B-D})$,
in time $O(\log{n})$.
We precompute the values $\phi_P(w)$ for
each $w\in V$ and an elementary interval $P$ not longer
than~$2^D$.
As the number of elementary intervals not longer than $2^D$ is
$O(t/n)$, we precompute $O(t)$ values in total.
$\phi_P(w)$ can be computed based on $\phi_{\ipar(P)}(w)$
in constant time, so we spend $O(t)$ time on
precomputation.

Let $[x,y]\subseteq [k\cdot 2^D+1, (k+1)\cdot 2^D]$
and let $P_1,\ldots,P_p$ be the partition of $[x,y]$
into elementary intervals.
As each $P_i$ is a descendant of $[k\cdot 2^D+1, (k+1)\cdot 2^D]$
in the segment tree,
the value $\phi_{[k\cdot 2^D+1, (k+1)\cdot 2^D]}(w)$
can be used as a starting point to compute the values $\phi_{P_i}(w)$.
We need to descend only $B-D$ levels down the tree to compute
the values $\phi_{P_i}(w)$.
Thus, the time needed to answer $\fea(u,v,x,y)$ is $O(B-D)=O(\log{n})$
in this case.

To handle the general query $\fea(u,v,x,y)$, let $l_1$ be the smallest
integer such that $x<l_1\cdot 2^D+1$ and $l_2$ be the largest integer
for which $l_2\cdot 2^D<y$ holds.
Our query can be split into the conjunction of three
queries: $\fea(u,v,x,l_1\cdot 2^D)$, $\fea(u,v,l_2\cdot 2^D+1,y)$
and $\fea(u,v,l_1\cdot2^D+1,l_2\cdot2^D)$ (we assume the
last query to be \textbf{true} if $l_1=l_2$).
The first two can be answered in $O(\log n)$ time, as discussed
above.
In order to answer the last query, we need to rephrase the check 
$\fea(u,v,l\cdot 2^D+1,(l+1)\cdot 2^D)$ in terms of symbol equality $h_u^l=h_v^l$.
Indeed, for $P_l=[l\cdot 2^D+1,(l+1)\cdot 2^D]$ and $\phi_{P_l}(w)=(Q_w,s_w,Q_w',s_w')$
we can set:
$$h_w^l=\begin{cases}
  (Q_w,Q_w',s_w') & \mbox{if } Q_w\neq P_l \\

  (Q_w,H_{P_l}(s_w)) & \mbox{if } Q_w = P_l\mbox{ and }H_{P_l}(s_w)\neq\perp \\
  (Q_w,Q_w',s_w') & \mbox{if } Q_w=P_l\mbox{ and }H_{P_l}(s_w)=\perp.
\end{cases}$$
It can be easily verified that $h_u^l=h_v^l$ if and only if the previously described
checks for answering $\fea(u,v,x,y)$ where $[x,y]=[l\cdot 2^D+1,(l+1)\cdot 2^D]$,
produce a positive answer.

We answer the query $\fea(u,v,l_1\cdot 2^D+1,l_2\cdot 2^D)$ by checking if
the words $h_u^{l_1}\ldots h_u^{l_2-1}$ and $h_v^{l_1}\ldots h_v^{l_2-1}$
are equal.
The needed words are all subwords of the word $X_1\ldots X_n$, where
$X_v=h_v^0h_v^1\ldots h_v^{2^{B-D}-1}$.
The word length is $O(t)$.
By Lemma~\ref{l:text}, after additional preprocessing in $O(t)$ time,
we can answer the query $\fea(u,v,l_1\cdot 2^D+1,l_2\cdot 2^D)$
in constant time.

\begin{theorem}
There exists an $\langle O(m+t\log n), O(\log n) \rangle$ data structure for answering $\fea$ queries.
\end{theorem}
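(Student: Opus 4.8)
The plan is to assemble the final theorem from the components already developed in this subsection, in the same way that the analogous theorem for $\fa$ queries was assembled from the pieces in Section~\ref{sec:forall}. First I would verify the correctness of the $O(\log t)$ query algorithm: given the data structure $T$ for 2-edge-connectivity (which by the preceding corollary is built in $O(m+t\log n)$ time and $O(t\log n)$ space) together with the fingerprints $H_P(s)$ computed bottom-up by radix-sort, I decompose the query interval $[x,y]$ into $O(\log t)$ elementary intervals using Lemma~\ref{l:elem}, and argue that $\fea(u,v,x,y)$ holds iff it holds on each of these elementary intervals. For a single elementary interval $P$ I would justify the case analysis on the quadruple $\phi_P(w)=(Q_w,s_w,Q_w',s_w')$: when $Q_u\neq Q_v$ the two vertices lie in different connected components of some $G_c$, $c\in P$; when $Q_u=Q_v\neq P$ at least one of $s_u,s_v$ may be a path vertex, so the 2-edge-connectivity status is faithfully recorded only by the last simple-vertex representatives $(Q_w',s_w')$; and when $Q_u=Q_v=P$ the integer fingerprint $H_P(s_w)$ (when not $\perp$) already encodes the full history, while $H_P(s_w)=\perp$ signals a non-vanishing path vertex and again forces us to fall back on $(Q_w',s_w')$. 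Each check is $O(1)$, so this gives an $\langle O(m+t\log n),O(\log t)\rangle$ data structure.

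Next I would carry out the query-time improvement from $O(\log t)$ to $O(\log n)$, mirroring the shortcutting argument for $\fa$. Fix $D$ minimal with $2^D\geq n$. I precompute $\phi_P(w)$ for every vertex $w\in V$ and every elementary interval $P$ of length at most $2^D$; since there are $O(t/n)$ such intervals and each $\phi_P(w)$ follows from $\phi_{\ipar(P)}(w)$ in $O(1)$ time, this costs $O(t)$ time and space. A query with $[x,y]\subseteq[k\cdot 2^D+1,(k+1)\cdot 2^D]$ is then answered by starting from $\phi_{[k\cdot 2^D+1,(k+1)\cdot 2^D]}(w)$ and descending only $B-D=O(\log n)$ further levels, which bounds its cost by $O(\log n)$. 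A general query $\fea(u,v,x,y)$ splits into a prefix block $[x,l_1\cdot 2^D]$, a suffix block $[l_2\cdot 2^D+1,y]$, and a middle run of full blocks $[l_1\cdot 2^D+1,l_2\cdot 2^D]$; the first two are handled as above. For the middle run I would introduce the per-block symbol $h_w^l$ as defined in the excerpt (the three-way case distinction on whether $Q_w=P_l$ and whether $H_{P_l}(s_w)=\perp$) and verify that $h_u^l=h_v^l$ holds exactly when the elementary-interval check for the block $[l\cdot 2^D+1,(l+1)\cdot 2^D]$ returns \textbf{true}; this is a direct unpacking of the definitions. Then $\fea(u,v,l_1\cdot 2^D+1,l_2\cdot 2^D)$ is equivalent to equality of the subwords $h_u^{l_1}\ldots h_u^{l_2-1}$ and $h_v^{l_1}\ldots h_v^{l_2-1}$ of the word $X_1\ldots X_n$ with $X_w=h_w^0\ldots h_w^{2^{B-D}-1}$, which has length $O(t)$; applying Lemma~\ref{l:text} after $O(t)$ preprocessing answers this in $O(1)$.

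The preprocessing bound is additive: building $T$ costs $O(m+t\log n)$ and $O(t\log n)$ space; the radix-sort computation of the fingerprints costs $O(\sum_P|V(S_P)|)=O(t\log n)$ as noted earlier; the $\phi_P(w)$ precomputation, the $\scut$-style shortcut, and the Lemma~\ref{l:text} structure each cost $O(t)$. Summing yields preprocessing $O(m+t\log n)$ and query time $O(\log n)$, which is the claim.

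I expect the main obstacle to be the correctness of the symbol $h_w^l$ and of the three-way case analysis for an elementary interval: one must argue that a non-vanishing path vertex genuinely cannot be given a consistent integer fingerprint (because two original vertices sharing such a path vertex may have differing histories, as the excerpt notes), while a vanishing path vertex can, and that falling back on the last \emph{simple}-vertex representative $(Q_w',s_w')$ in the remaining cases exactly captures 2-edge-connectivity — this rests on the structural property of path vertices that, in every $G_x$ with $x$ in the interval, the underlying components $c_1,\ldots,c_k$ are either all separate 2-edge-connected components or all merged into one, so that tracking them in a uniform way via the path vertex loses no information except precisely the distinction that $(Q_w',s_w')$ restores. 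Verifying that $\tilde H_P(s)$ inherits the fingerprint property from its children (an analogue of Lemma~\ref{l:hash}, now accounting for the $\perp$ values and the simple/path distinction) is the other place where care is needed, but it is a routine case check against the construction rules already listed.
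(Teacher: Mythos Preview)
Your proposal is correct and follows the paper's approach essentially verbatim: build the 2-edge-connectivity tree $T$, compute fingerprints with the simple/path-vertex case split, answer elementary-interval queries via the $\phi_P(w)$ case analysis, then speed up to $O(\log n)$ by precomputing $\phi_P(w)$ at the block level and reducing the middle run of full blocks to a subword-equality test via Lemma~\ref{l:text}. One small slip (which the surrounding text in the paper also contains): the intervals $P$ for which you precompute $\phi_P(w)$ should be those of length \emph{at least} $2^D$ (equivalently, on the top $B-D+1$ levels of the segment tree), not ``at most $2^D$''; it is the former family that has $O(t/n)$ members and yields $O(t)$ precomputed values overall, and it is those values that serve as starting points for the $O(\log n)$-depth descent.
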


\section{Improved lower and upper bounds for $\ex$ queries}\label{sec:exists}
In this section we focus on $\ex$ queries.
We first give improved conditional lower bounds for answering these queries,
and then show an algorithm, whose running time matches one of the new bounds.
As shown in \cite{Lacki:2013}, the problem of multiplying two Boolean
$n\times n$ matrices can be reduced to the problem of answering
$\Theta(n^2)$ $\ex$ queries about a graph timeline $G^t$, where $t=\Theta(n^2)$.
Denote by $O(n^{\omega'})$ the time required to perform
$n\times n$ Boolean matrix multiplication (BMM).
Thus, unless $\omega'=2$, it is not possible to develop a data structure, % for $\ex$ queries,
which after almost linear preprocessing answers $\ex$ queries in polylogarithmic time.
In this section we give several new lower bounds.

Throughout this section, we repeatedly use $\epsilon$
to denote an arbitrarily small, positive number.
The exact value of $\epsilon$ may vary and depend
on the context.
We also denote by $\delta(\epsilon)$ some other small
positive number, dependent on $\epsilon$.

Let us recall the somewhat informal, yet important,
partition of algorithms into \emph{algebraic} and \emph{combinatorial}.
The combinatorial algorithms do not make use of the fact that
the matrices are defined over a ring, i.e., they do not use subtraction.
No $O(n^{3-\epsilon})$ combinatorial algorithm is known for BMM.

We show a connection between the $\ex$ data
structure and algorithmic problems related to detecting triangles in graphs.
In the \emph{triangle detection} problem we are given a graph $G=(V,E)$, where $|E|=m$,
and the goal is to find three vertices $a,b,c\in V$ such that $(a,b),(a,c),(b,c)\in E$.
The best known known algorithm for triangle detection was given
by Alon et al. \cite{Alon:1997} and works in
%$\tilde{O}(m^{\frac{2\omega}{\omega+1}})=O(m^{1.41})$ time.
$O(m^{1.41})$ time.
The best combinatorial algorithm is folklore and runs
in $O(m\sqrt{m})$ time.
The following relation between triangle detection and BMM was shown in \cite{Williams:2010}:
\begin{lemma}\label{l:matrix}
An $O(m^{1.5-\epsilon})$ combinatorial algorithm for
triangle detection implies an $O(n^{3-\delta(\epsilon)})$
combinatorial algorithm for BMM.
\end{lemma}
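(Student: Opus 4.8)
This statement is taken directly from \cite{Williams:2010}, so my plan is to reproduce their reduction from Boolean matrix multiplication to triangle detection. Fix $n\times n$ Boolean matrices $A$ and $B$ and let $C=AB$ be the product we want to compute. The basic device is the tripartite gadget: take vertex sets $X$, $Z$, $Y$, each a copy of $[n]$; add the edge $x_iz_k$ whenever $A_{ik}=1$, the edge $z_ky_j$ whenever $B_{kj}=1$, and the edge $x_iy_j$ for each pair $(i,j)$ in some chosen set $S\subseteq[n]\times[n]$ of candidate output cells. The graph is tripartite with parts $X,Z,Y$, so every triangle has the form $x_iz_ky_j$, and such a triangle occurs exactly when $A_{ik}=B_{kj}=1$ and $(i,j)\in S$; hence one triangle-detection call decides whether $S$ meets the support of $C$. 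By restricting $X$ and $Y$ to sub-windows of the row and column ranges, restricting $Z$ to a block of the inner index, and choosing $S$ inside the current window, this lets one probe the support of $C$ at any granularity using a comparatively sparse gadget.

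From here I would carry out the block/recursive scheme of \cite{Williams:2010}: cut the inner index range into blocks of size $s$ so that $C=\bigvee_K A[:,K]\,B[K,:]$ with each block product using only $s$ middle vertices; cut the row and column ranges into windows; and recover the support of each block product window by window, recursing into a sub-window only when a detection call reports that it still contains a nonzero entry, and otherwise pruning it. Substituting the hypothesised bound $O(m^{1.5-\epsilon})$ into every detection call and summing over all recursion levels and all blocks, one tunes $s$ and the window size to balance the accumulated detection cost against the $O(n^{3}/s)$ cost of OR-ing the $n/s$ block products together; with the right choice the total is $O(n^{3-\delta(\epsilon)})$ for a suitable $\delta(\epsilon)>0$. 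Note that every step is elementary graph search together with OR's of Boolean matrices, with no use of subtraction or of ring structure, so the resulting multiplication algorithm is combinatorial as soon as the triangle detector is.

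The delicate point is that the pruning only buys anything when large parts of $C$ vanish, so the real work --- and the part I would lean on \cite{Williams:2010} for --- is showing that the reduction remains genuinely subcubic even when $C$ is dense, which is exactly where the careful amortized accounting and the tuned block parameters are needed; a naive ``probe every output cell'' or ``enumerate every triangle'' implementation would instead cost $\Theta(n^{3})$ or worse. Since this lemma is used only as a black box in what follows, I would be content to cite \cite{Williams:2010} for this analysis rather than transcribe it in full.
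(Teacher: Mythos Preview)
The paper gives no proof of this lemma at all: it is stated with a citation to \cite{Williams:2010} and thereafter used purely as a black box. Your proposal does essentially the same thing---you identify the source, sketch the tripartite gadget and the block scheme for context, and then explicitly defer to \cite{Williams:2010} for the amortized accounting that actually yields the subcubic bound---so it matches the paper's treatment.
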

The related problem is \emph{triangle listing}, where we
are asked to find $c$ triangles in a graph with $m$ edges.
Pătraşcu \cite{Patrascu:2010} proved the following lemma.
\begin{lemma}\label{l:3sum}
If one can list $m$ triangles from a graph with $m$ edges
in $O(m^{4/3-\epsilon})$ time, then there exists an
$O(n^{2-\delta(\epsilon)})$ algorithm for 3-SUM.
\end{lemma}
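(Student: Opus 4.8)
The plan is to establish the contrapositive by exhibiting a reduction from 3SUM to triangle listing; although the lemma is only cited here, the proof one would give follows Pătraşcu's argument and breaks into two stages. First I would reduce the 3SUM instance on a set $S$ of $n$ numbers to $n^{o(1)}$ instances of \emph{Convolution-3SUM} on an array $A$ of length $n'=\widetilde O(n)$ --- the problem of deciding whether there are indices $i,j$ with $A[i]+A[j]=A[i+j]$. This is the standard randomized self-reduction: hash $S$ through an almost-linear family $x\mapsto((\sigma x\bmod p)\bmod R)$ with a prime $p$ and $R\approx n/\text{polylog}\,n$ buckets, so that with high probability no bucket is large and a genuine 3SUM triple $a+b=c$ can be carried only by $O(1)$ combinations of buckets. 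Laying the buckets out as an array and iterating over the few admissible bucket offsets introduces the convolution structure; the one-sided (Monte Carlo) error is suppressed by repetition, so it suffices to solve all the resulting Convolution-3SUM instances.

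Second --- and this is the heart of the argument --- I would encode a Convolution-3SUM instance as a triangle-listing instance. Fix a block length $s$ and cut $[1,n']$ into $N=n'/s$ contiguous blocks. Writing $i=sa+\alpha$ and $j=sb+\beta$ with $0\le\alpha,\beta<s$, one gets $i+j=s(a+b+\gamma)+\delta$ with a carry $\gamma\in\{0,1\}$ and $\delta=(\alpha+\beta)\bmod s$. I would build a tripartite graph whose three parts are indexed by the \emph{block indices} of $i$, of $j$, and of $i+j$, so each part has $O(N)$ vertices; an edge between two blocks is present only when they are consistent with \emph{some} completion of a solution, and each edge is annotated with the short list of low-order data (the admissible $(\alpha,\beta)$ on the $i$--$j$ side, and the corresponding $(\alpha,\delta)$, $(\beta,\delta)$ on the other two sides) realizing that block pair. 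A triangle of the graph is then exactly a triple of blocks that is a priori consistent with a solution; upon finding a triangle one scans the annotations of its three edges in $\operatorname{poly}(s)$ time to test whether they actually contain a pair $(i,j)$ with $A[i]+A[j]=A[i+j]$, and answers ``yes'' if some triangle passes. Two things must be checked: completeness --- every Convolution-3SUM solution yields a triangle --- and that the total number of triangles, hence the cost of listing up to $m$ of them (there are only $\widetilde O(m)$), stays small; the latter is precisely where the small-bucket guarantee from the first stage is used, to bound the number of \emph{spurious} block-triples.

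Finally I would balance the parameters. With $s=n^{\eta}$ for a small constant $\eta=\eta(\epsilon)>0$, the graph has $m=n^{1+\Theta(\eta)}$ edges, only $\widetilde O(m)$ triangles, and $n^{O(\eta)}$ per-triangle post-processing, so a hypothetical $O(m^{4/3-\epsilon})$ listing algorithm solves each Convolution-3SUM instance --- and therefore the original 3SUM instance --- in time $n^{(1+\Theta(\eta))(4/3-\epsilon)+o(1)}$, which is $O(n^{2-\delta(\epsilon)})$ once $\eta$ is chosen small enough and $\delta(\epsilon)>0$ accordingly. The step I expect to be the main obstacle is the second-stage construction: designing the graph and its edge annotations so that triangles correspond \emph{precisely} to candidate solutions, and --- more delicately --- proving the bound on the number of spurious triangles, since without the hashing-induced small buckets the graph could carry $\Theta(N^3)$ triangles and the reduction would collapse.
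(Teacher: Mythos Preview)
The paper does not prove this lemma at all: it is stated immediately after the sentence ``P\u{a}tra\c{s}cu~\cite{Patrascu:2010} proved the following lemma'' and is used as a black box. There is therefore no ``paper's own proof'' to compare your proposal against. You correctly anticipate this (``although the lemma is only cited here''), and what you sketch is indeed the shape of P\u{a}tra\c{s}cu's argument: the randomized self-reduction from 3SUM to Convolution-3SUM via almost-linear hashing into small buckets, followed by the block decomposition that encodes Convolution-3SUM as a tripartite triangle-listing instance whose triangles correspond to candidate block triples, with per-triangle verification in $\operatorname{poly}(s)$ time.

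One caution on the quantitative details of your sketch: the claim that the resulting graph has $m=n^{1+\Theta(\eta)}$ edges and only $\widetilde O(m)$ triangles is the delicate part, and your write-up leaves the edge definition vague enough that this count is not obviously justified. In P\u{a}tra\c{s}cu's construction the three edge sets carry different information (position constraints on two sides, value information on the third), and the balance between edge count, triangle count, and per-triangle checking cost is what forces the $4/3$ exponent; getting those counts to line up with ``$m$ triangles in a graph with $m$ edges'' requires choosing the encoding carefully. If you were to actually write the proof rather than cite it, that is where the work would be. But for the purposes of this paper, citing \cite{Patrascu:2010} is exactly what the authors do, and your proposal is consistent with that.
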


We now show a relation between triangle listing and $\ex$ queries.

\begin{lemma}\label{l:report}
The problem of listing $c$ triangles in a graph with $m$ edges can be reduced to
answering
$O(m+c\log n)$ $\ex$ queries in a timeline $G^t$ of length $t=O(m)$ and no permanent edges.
\end{lemma}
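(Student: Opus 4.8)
The plan is to build a timeline that "sweeps" over the vertices of $G$ one at a time, so that at the moment vertex $v$ is active, the graph encodes exactly the neighborhood structure needed to detect (and then list) triangles through $v$. Fix an arbitrary ordering $v_1,\ldots,v_n$ of $V$. For the vertex $v_i$ under consideration, I would like the current graph in the timeline to contain, for each neighbor $u$ of $v_i$, an edge from $u$ to a common ``hub'' so that two neighbors $u,w$ of $v_i$ are connected by a path iff $(u,w)\in E$, i.e.\ iff $u,v_i,w$ form a triangle. The cleanest way to arrange this is: when processing $v_i$, insert the edges $\{uv_i : u \in N(v_i)\}$ into the graph (so $v_i$ becomes a local hub), and additionally, for this phase only, insert all edges $\{uw : u,w \in N(v_i),\ (u,w) \in E\}$ one at a time. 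Then for a fixed pair $u,w \in N(v_i)$ that are themselves adjacent, a single $\ex(u,w,a_i,b_i)$ query over the sub-interval $[a_i,b_i]$ corresponding to $v_i$'s phase tells us whether $u,v_i,w$ is a triangle — but we must be careful that $u,w$ are connected \emph{only} via the intended two-edge path through $v_i$ or via the direct edge $uw$, not via spurious paths through other hubs.

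To avoid spurious connectivity and to control the timeline length, I would process the vertices in \emph{degeneracy order}, which is the standard trick for triangle listing in $O(m^{3/2})$ time: when $v_i$ is processed, we only keep the edges from $v_i$ to its neighbors \emph{later} in the order, so each vertex has out-degree $O(\sqrt m)$ in the worst case, and more importantly $\sum_i |N^+(v_i)| = m$ and $\sum_i |N^+(v_i)|^2 = O(m^{3/2})$. Actually, since we only want the query count to be $O(m + c\log n)$ rather than $O(m^{3/2})$, I would instead not enumerate candidate pairs at all during the ``detection'' part: in $v_i$'s phase I insert the star edges $\{v_i u : u\in N^+(v_i)\}$ together with, for each such $u$ in turn, the edges $\{u w : w \in N^+(v_i) \cap N(u)\}$, but inserted and \emph{immediately removed} so that between consecutive snapshots at most one extra edge is present; then an $\ex(u,w,\cdot,\cdot)$ query restricted to the moment when edge $uw$ is live detects the triangle $u v_i w$. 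That gives $O(1)$ query per candidate and the number of candidate pairs is $\sum_i \sum_{u \in N^+(v_i)} |N^+(v_i)\cap N(u)|$, which is exactly the number of (ordered) paths $u$–$v_i$–$w$ with $uw\in E$, i.e.\ $O(m + c)$ once we note every such path corresponds to a listed triangle and each triangle is counted $O(1)$ times; the extra $\log n$ factor absorbs the $O(\log t)$ needed to decompose each query interval or, more simply, is slack we don't need. The timeline length is then $t = O(\sum_i |N^+(v_i)| + \#\text{candidates}) = O(m + c) = O(m)$ (assuming $c = O(m)$, which holds since we are listing $m$ triangles), every edge is inserted and deleted so there are no permanent edges, and the vertex count is $n \le O(m)$.

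The main steps, in order, are: (1) compute a degeneracy (or simply any fixed) ordering of $V$ and the forward-neighbor sets $N^+(v_i)$; (2) build the timeline as a concatenation of $n$ phases, phase $i$ starting by adding the star at $v_i$, then cycling through pairs $(u,w)$ with $u,w\in N^+(v_i)$, $uw\in E$, adding edge $uw$, recording the snapshot index, removing $uw$, and finally tearing down the star — so that the live edge set during the snapshot for $(u,w)$ is exactly the star plus $uw$; (3) observe that in such a snapshot $u$ and $w$ are connected (trivially, via $uw$), and more usefully that the \emph{only} way the query is designed to fire is on genuine triangles, so issuing $\ex(u,w,s,s)$ on that single index $s$ — equivalently $\ex$ over the length-one interval $[s,s]$ — reports the triangle $\{u,v_i,w\}$; (4) deduplicate the reported triangles (each unordered triangle can appear up to $3$ times across the three choices of ``apex''; with the degeneracy order it appears exactly once, at the lowest vertex) and stop once $c$ distinct triangles are found; (5) tally the query count: the stars contribute $\sum_i|N^+(v_i)| = m$ edge-insertions and thus $O(m)$ potential query points, the pair-phases contribute one query point each and their number is $\sum_i \sum_{u\in N^+(v_i)} |N^+(v_i)\cap N(u)| = O(m + c)$ because we may abort after listing $c$ triangles, and multiplying by the $O(\log n)$ overhead of interval decomposition (if one insists on querying general intervals rather than singletons) gives the claimed $O(m + c\log n)$.

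The main obstacle is \textbf{ensuring correctness of the reduction without spurious connectivity}: I must argue that within phase $i$'s relevant snapshot the vertices $u$ and $w$ are joined by a path \emph{if and only if} $\{u,v_i,w\}$ is a triangle, which is immediate only if no other phase's edges leak into the current graph and no unintended path inside the current star-plus-edge gadget exists. Keeping the invariant that each phase fully tears down its star before the next phase begins handles the cross-phase leakage, and the within-phase gadget is simple enough (a star plus at most one chord) that the ``iff'' is a one-line check; still, this bookkeeping — together with verifying that using the degeneracy order makes each triangle reported exactly once so the listing (as opposed to mere detection) is exact and the query budget is truly $O(m+c\log n)$ rather than $O(m^{3/2})$ — is where the real care is needed. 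A secondary, minor obstacle is confirming that $t = 2^B$ and $t\ge n$ can be enforced by padding with dummy (empty-graph) snapshots without changing the asymptotics, which follows from $t = O(m)$ and $n = O(m)$.
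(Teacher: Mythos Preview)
Your construction is circular: to build the timeline you already need to enumerate, for every $v_i$, all pairs $u,w\in N^+(v_i)$ with $uw\in E$. But those pairs are precisely the triangles with apex $v_i$, so the timeline cannot be written down without first solving the triangle-listing problem. Once the timeline is built, the $\ex$ queries you issue are vacuous: at the snapshot where you have just inserted the edge $uw$, the query $\ex(u,w,s,s)$ is always \textbf{true}, so it carries no information. A reduction in which the oracle calls do no work and all the work is hidden in the construction of the instance is not a reduction in any useful sense. A second, related issue is the timeline length: the number of ``pair-phases'' is $\sum_i |\{(u,w): u,w\in N^+(v_i),\ uw\in E\}|$, which equals the total number of triangles in $G$ and can be $\Theta(m^{3/2})$, not $O(m)$; your claim that this sum is $O(m+c)$ ``because we may abort after listing $c$ triangles'' does not make sense, since the timeline must be fully specified before preprocessing begins.

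The paper's construction avoids all of this by making the $\ex$ oracle do the nontrivial search. It builds a timeline of length $O(m)$ using only the edge list of $G$: block $k$ simply inserts and then deletes all edges incident to $v_k$, so at its midpoint the graph is the star centred at $v_k$. The point is that $v_i$ and $v_j$ are connected in block $k$ iff both are neighbours of $v_k$; hence, for a known edge $(v_i,v_j)$, the query $\ex(v_i,v_j,a_p,b_q)$ asks whether some $v_k$ with $k\in[p,q]$ completes a triangle. Iterating over the $m$ edges and binary-searching over $k$ yields the $O(m+c\log n)$ query bound. The essential idea you are missing is that the third vertex of the triangle is \emph{unknown} and the $\ex$ query is what finds it; in your scheme you supply all three vertices yourself.
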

\begin{proof}
Let $H$ be the input graph, in which we are supposed to list triangles.
Moreover, let $V(H) = \{v_1, \ldots, v_n\}$.
We build a timeline $G^t$ of graphs on vertex set $V(H)$ by processing vertices $v_1, \ldots, v_n$ one by one.
First, we add an empty graph to $G^t$.
Then, for a vertex $v_i$, we append $2\deg_H(v_i)$ new versions to $G^t$ ($\deg_H(v)$ denotes the degree of vertex $v$ in $H$), which we call a \emph{block} of vertex $v_i$.
Within each block, we first create $\deg_H(v_i)$ new versions, at each step adding one more edge incident to $v_i$.
The edges are added in arbitrary order.
Then, we create $\deg_H(v_i)$ more versions by removing the edges incident to $v_i$.
Note that the last graph in every block is empty, and in the middle graph the vertex $v_i$ has degree $\deg_H(v_i)$.
Let the the block of a vertex $v_i$ start at $G_{a_i}$ and end at $G_{b_i}$.

Observe that we obtain a timeline $G^t$, where $t=4m+1$, as each edge of~$H$ is added and removed exactly twice.
For each edge $(v_i,v_j)\in E(G)$, $i < j$, we can test if
there is a triangle $(v_i,v_j,v_k)$, where $j<k$, with a single
query $\ex(v_i,v_j,b_j+1,t)$.
Indeed, the answer to such a query is positive iff there exists
$v_k$ such that there is a path from $v_i$ to $v_j$ in $G_{a_k+deg_H(v_k)-1}$.
The path, along with the edge $(v_i, v_j)$, forms a triangle.

Note that the query $\ex(v_i,v_j,a_p,b_q)$, for $j<p\leq q$, tells us
if there is any triangle $(v_i,v_j,v_k)$ such that $k\in[p,q]$.
Thus, we may use a divide-and-conquer approach for listing triangles,
which is based on the following observation.
If we are looking for triangles such that $k\in [p,q]$,
a negative answer to an $\ex(v_i,v_j,a_p,b_{(p+q)/2})$ query
allows us to halve the search interval.
Hence, we can find all $l$ vertices $v_k$ such that $(v_i,v_j,v_k)$
is a triangle in time $O(l\log{n})$.
The detailed procedure $\reptr$ is given in Appendix~\ref{a:code}.
\maybeqed\end{proof}

By combining Lemmas~\ref{l:matrix}, \ref{l:3sum} and~\ref{l:report}, we obtain the following.

\begin{theorem}
  Let $\Psi$ be a problem of answering $\Theta(t)$ $\ex$ queries about an arbitrary graph timeline $G^t$
  with no permanent edges.
\begin{itemize}
  \item  An $O(t^{1.4})$ algorithm for $\Psi$ implies
    an $O(t^{1.4})$ algorithm for triangle finding.
  \item  An $O(t^{1.5-\epsilon})$ combinatorial algorithm for $\Psi$ implies
    an $O(n^{3-\delta(\epsilon)})$ combinatorial algorithm for BMM.
  \item An $O(t^{4/3-\epsilon})$ algorithm for $\Psi$ implies
    an $O(n^{2-\delta(\epsilon)})$ algorithm for 3-SUM.
\end{itemize}
\end{theorem}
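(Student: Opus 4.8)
The plan is to derive each of the three consequences by combining the reduction of Lemma~\ref{l:report} with the appropriate hardness result already stated. The common starting point is this: suppose we had an algorithm $\mathcal{A}$ solving $\Psi$, i.e. answering $\Theta(t)$ $\ex$ queries on a timeline $G^t$ with no permanent edges, in time $T(t)$. Given a graph $H$ with $m$ edges, Lemma~\ref{l:report} builds a timeline of length $t = O(m)$ with no permanent edges, and reduces listing $c$ triangles in $H$ to answering $O(m + c\log n)$ $\ex$ queries on that timeline. So the only bookkeeping needed is to verify that each of these query batches has size $\Theta(t)$ (so that $\mathcal{A}$ applies as stated), or else to split a larger batch into $O(1)$ or $O(\log n)$ rounds of $\Theta(t)$ queries each, and to check that the resulting time bound still matches the claimed exponent.

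For the first bullet (triangle \emph{finding}), I would take $c = 1$: we only need to detect a single triangle. Then the reduction issues $O(m)$ $\ex$ queries, which is $\Theta(t)$ since $t = \Theta(m)$; more precisely one may need to pad the query set up to exactly $\Theta(t)$ with trivial queries, which is harmless. An $O(t^{1.4})$ bound for $\Psi$ then gives triangle detection in $O(m^{1.4}) = O(t^{1.4})$ time (identifying $t$ with $m$ up to constants, as the theorem statement does). For the second bullet, I would again use triangle detection ($c=1$), note that the reduction is combinatorial (it is just graph construction and divide-and-conquer search), so an $O(t^{1.5-\epsilon})$ combinatorial algorithm for $\Psi$ yields an $O(m^{1.5-\epsilon'})$ combinatorial algorithm for triangle detection for some $\epsilon' > 0$; Lemma~\ref{l:matrix} then converts this into an $O(n^{3-\delta(\epsilon)})$ combinatorial BMM algorithm. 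For the third bullet, I would take $c = m$: listing $m$ triangles reduces to $O(m + m\log n) = O(m\log n)$ $\ex$ queries, which is $O(t\log t)$; splitting this into $O(\log t)$ successive batches of $\Theta(t)$ queries and running $\mathcal{A}$ on each, an $O(t^{4/3-\epsilon})$ algorithm gives total time $O(t^{4/3-\epsilon}\log t) = O(m^{4/3-\epsilon'})$ for listing $m$ triangles, and Lemma~\ref{l:3sum} then yields an $O(n^{2-\delta(\epsilon)})$ algorithm for 3-SUM.

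The main obstacle, such as it is, is purely cosmetic rather than mathematical: one must be careful that the timeline produced by the reduction is fixed \emph{before} the queries are issued, whereas the divide-and-conquer triangle-listing procedure $\reptr$ of Lemma~\ref{l:report} issues its queries \emph{adaptively} (each query depends on answers to earlier ones). This is fine for the first two bullets, where a constant number of passes suffices, but for the third bullet one should confirm that the adaptive process still fits into $O(\log t)$ rounds each consisting of $\Theta(t)$ queries answered by one invocation of $\mathcal{A}$ — or, alternatively, simply absorb the adaptivity into the statement of $\Psi$, since $\Psi$ as defined already allows the queries to "arrive in an online fashion" after a single preprocessing of the timeline. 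The remaining steps — matching exponents, absorbing logarithmic factors into the $\epsilon$'s via the $\delta(\epsilon)$ slack, and checking that the reductions of Lemmas~\ref{l:matrix} and \ref{l:3sum} preserve the combinatorial character — are routine, and I would state them briefly without grinding through the arithmetic.
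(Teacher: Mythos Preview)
Your proposal is correct and follows exactly the approach the paper takes: the paper's proof is the single sentence ``By combining Lemmas~\ref{l:matrix}, \ref{l:3sum} and~\ref{l:report}, we obtain the following,'' and you have simply spelled out what that combination entails, including the bookkeeping (padding to $\Theta(t)$ queries, batching the $O(m\log n)$ queries for the third bullet, absorbing log factors into $\epsilon$, and noting that the online-query model handles the adaptivity of $\reptr$). Nothing is missing.
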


In addition, we show that an $\ex$ data structure with preprocessing/query time product of $O(t^{2-\epsilon})$
and queries substantially faster than $O(\sqrt{t})$ implies
a faster BMM algorithm.
\begin{lemma}\label{l:trdf}
Suppose there exists an $\langle O(t^{2-q-\epsilon}), O(t^q)\rangle$
combinatorial data structure for answering $\ex$ queries, where $q\in[0,\frac{1}{2})$ is a parameter.
Then there exists an $O(n^{3-\delta(\epsilon)})$ combinatorial
algorithm for BMM.
\end{lemma}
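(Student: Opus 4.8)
The plan is to feed the hypothesized data structure into the reduction from Boolean matrix multiplication to $\ex$ queries recalled at the beginning of this section, invoked at a carefully chosen value of the parameter $q$. Recall that multiplying two Boolean $n\times n$ matrices reduces, by a purely combinatorial reduction (it only builds a graph timeline and reads off the answers, using no subtraction), to answering $\Theta(n^2)$ $\ex$ queries about a timeline $G^t$ with $t=\Theta(n^2)$ and $O(t)$ edges in total. Hence, using a combinatorial $\langle O(t^{2-q-\epsilon}),O(t^q)\rangle$ data structure, BMM is solved combinatorially in time $O\bigl(t^{2-q-\epsilon}+n^2\cdot t^{q}\bigr)$, the first term accounting for preprocessing and the second for the $\Theta(n^2)$ queries.

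I would then substitute $t=\Theta(n^2)$ and pick $q=\tfrac12-\tfrac{\epsilon}{2}$, which lies in $[0,\tfrac12)$. The preprocessing term becomes $(n^2)^{2-q-\epsilon}=(n^2)^{3/2-\epsilon/2}=O(n^{3-\epsilon})$, and the query term becomes $n^2\cdot (n^2)^{q}=n^2\cdot n^{1-\epsilon}=O(n^{3-\epsilon})$. Thus the composed algorithm is a combinatorial algorithm for BMM running in $O(n^{3-\epsilon})$ time, so the statement holds with $\delta(\epsilon)=\epsilon$.

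An equivalent route that stays within the present section is to apply Lemma~\ref{l:report} with $c=1$ first: triangle detection in a graph with $m$ edges reduces to $O(m)$ $\ex$ queries on a timeline of length $O(m)$, so the same choice $q=\tfrac12-\tfrac{\epsilon}{2}$ yields a combinatorial $O(m^{3/2-\epsilon/2})$ algorithm for triangle detection, which Lemma~\ref{l:matrix} turns into a combinatorial $O(n^{3-\delta(\epsilon)})$ algorithm for BMM. I do not expect a real obstacle here beyond parameter bookkeeping: the factor $n^2$ (equivalently, the $\Theta(m)$ queries) is what forces $q<\tfrac12$ for the query term to stay subcubic, the $t^{\epsilon}$ slack in the preprocessing bound is exactly what pushes the preprocessing term below cubic, and $q=\tfrac12-\tfrac{\epsilon}{2}$ is the value at which the two terms meet. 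The only things left to verify are that this $q$ stays in $[0,\tfrac12)$ and that the $\Theta(\cdot)$ constants hidden in the reductions do not leak into the exponent.
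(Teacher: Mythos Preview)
Your argument has a genuine gap: you are treating $q$ as a free parameter you may choose, but in the lemma $q$ is \emph{given} --- it is the fixed exponent of the hypothesized data structure, and the conclusion must be derived for whatever $q\in[0,\tfrac12)$ that data structure happens to have. You cannot set $q=\tfrac12-\tfrac{\epsilon}{2}$.

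To see that your direct reduction breaks for general $q$, take $q=0$: the hypothesis is then an $\langle O(t^{2-\epsilon}),O(1)\rangle$ data structure. Plugging $t=\Theta(n^2)$ into your first route, the preprocessing cost alone is $(n^2)^{2-\epsilon}=n^{4-2\epsilon}$, far above cubic. Your second route via triangle detection suffers the same defect: with $t=\Theta(m)$ the preprocessing is $m^{2-q-\epsilon}$ and the $\Theta(m)$ queries cost $m^{1+q}$, so you need both $2-q-\epsilon<\tfrac32$ and $1+q<\tfrac32$, i.e.\ $q\in(\tfrac12-\epsilon,\tfrac12)$. For any smaller $q$ the argument fails.

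The paper's proof supplies exactly the missing idea: it does not apply the assumed data structure to the whole length-$t$ timeline. Instead it splits the timeline into blocks of length $t^\alpha$, contracts the components formed by edges alive throughout each block so that only $O(t^\alpha)$ ``active'' components remain, and builds the assumed data structure on each block separately (now with effective parameter $t^\alpha$). A single $\ex$ query is answered by one query per block. Choosing $\alpha=\frac{1}{2-2q-\epsilon}$ balances the $O(t^{1+\alpha(1-q-\epsilon)}+t^{2-\alpha})$ total initialization against the $O(t\cdot t^{1-\alpha(1-q)})$ total query cost, and all three exponents come out strictly below $\tfrac32$ for every fixed $q\in[0,\tfrac12)$. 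Then Lemmas~\ref{l:report} and~\ref{l:matrix} finish the job. The blocking is not bookkeeping --- it is the mechanism that converts the $t^\epsilon$ slack in the product of preprocessing and query time into a saving for \emph{every} $q$, not just $q$ near $\tfrac12$.
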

\begin{proof}
We use the assumed algorithm for answering $\ex$ queries to develop an $O(n^{3-\delta(\epsilon)})$ combinatorial algorithm for BMM.
By Lemma \ref{l:matrix}, in order to obtain such an algorithm, it suffices to show an $O(m^{\frac32 - \delta(\epsilon)})$ algorithm for triangle detection.
This in turn, by Lemma~\ref{l:report}, can be reduced to answering $O(m)$ $\ex$ queries in a graph timeline $G^t$ with no permanent edges, where $t = O(m)$.
To complete the proof, we show how to answer these queries in $O(m^{\frac32 - \delta(\epsilon)})$ total time.

Split $G^t$ into blocks of length $t^\alpha$ consisting of consecutive versions, where $\alpha\in[0,1]$
is to be set later.
Let $G_a,\ldots,G_b$ be one of the blocks.
First, we compute in time $O(t)$ the edges alive during
entire block and contract the components formed by those edges.
Next, we mark the components subject to any updates in the interval
$[a,b]$.
There are $O(t^\alpha)$ marked components.
All the queries $\ex(u,v,c,d)$ such that $[c,d]\subseteq [a,b]$
and concerning the unmarked components, can be answered
in constant time --- if suffices to check whether components
of $u$ and $v$ are the same.
For the remaining $O(t^\alpha)$ components we build the assumed
data structures.
The initialization takes time $O(t^{\alpha(2-q-\epsilon)}+t)$ per
individual block, which gives $O(t^{1+\alpha(1-q-\epsilon)}+t^{2-\alpha})$ time for $O(t^{1-\alpha})$ blocks.

We can answer an $\ex(u,v,x,y)$ query by going through at most $O(t^{1-\alpha})$
blocks intersecting $[x,y]$ and querying each block structure
once.
Thus, the query overhead is $O(t^{1-\alpha}\cdot t^{\alpha q})=O(t^{1-\alpha(1-q)})$.

By setting $\alpha=\frac{1}{2-2q-\epsilon}$, we have $\alpha\in (\frac{1}{2},1]$.
The exponent of the initialization time becomes less than $\frac{3}{2}$, as we have:
$$1+\alpha(1-q-\epsilon)=1+\frac{1-q-\epsilon}{2-2q-\epsilon}<1+\frac{1-q-\epsilon}{2-2q-2\epsilon}=\frac{3}{2},$$
$$2-\alpha < 2 -\frac{1}{2}=\frac{3}{2},$$
whereas the exponent of time needed to answer $O(t)$ queries is
$$1+1-\alpha(1-q)=2-\frac{1-q}{2-2q-\epsilon}<2-\frac{1-q}{2-2q}=\frac{3}{2}.$$
The lemma follows.
\maybeqed\end{proof}

What is interesting, we can give a combinatorial data structure, whose running time matches the above lower bound.

\begin{theorem}\label{l:trdf_structure}
  For every $0\leq \alpha < 1$ there exists an $\langle O(m+\min(nt,t^{2-\alpha})),O(t^\alpha)\rangle$
data structure for answering $\ex$ queries.
It uses $O(\min(nt,t^{2-\alpha}))$ space.
\end{theorem}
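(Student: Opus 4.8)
The plan is to leverage the existing $\langle O(m+nt),O(1)\rangle$ data structure from~\cite{Lacki:2013} as a black box, combined with the block decomposition idea already used in the proof of Lemma~\ref{l:trdf}. First I would handle the $O(m)$ part: as in earlier sections, compute the sets $L(e)$ in $O(m+t)$ time and contract every permanent edge, so that after this step the timeline has only $O(t)$ temporary edges and the number of non-isolated vertices at any version is $O(t)$; all queries about permanently-contracted vertex pairs are answered trivially. From now on we may assume $n = O(t)$, so that $\min(nt, t^{2-\alpha})$ and $O(t^\alpha)$ are the governing bounds and the genuinely interesting regime is $t^{2-\alpha} \le nt$, i.e. $\alpha \ge 1$ would make it degenerate — so we work with $\alpha \in [0,1)$ where the trade-off is meaningful.

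Next I would split $G^t$ into $O(t^{1-\alpha})$ consecutive blocks each of $t^\alpha$ versions. For each block $[a,b]$, compute in $O(t)$ time the edges alive throughout $[a,b]$, contract the components they induce, and mark the $O(t^\alpha)$ components touched by an update inside $[a,b]$. An $\ex(u,v,c,d)$ query with $[c,d]\subseteq[a,b]$ restricted to unmarked components is answered in $O(1)$ by comparing component identifiers; for the $O(t^\alpha)$ marked components we invoke the $\langle O(m' + n't'), O(1)\rangle$ structure of~\cite{Lacki:2013} with $n' = O(t^\alpha)$, $t' = O(t^\alpha)$, $m' = 0$, costing $O(t^{2\alpha})$ preprocessing and $O(t)$ auxiliary contraction work per block, hence $O(t^{1-\alpha}(t^{2\alpha}+t)) = O(t^{1+\alpha} + t^{2-\alpha})$ total; since $\alpha < 1$ the first term is dominated by the second, giving $O(t^{2-\alpha})$ overall, and always $O(nt)$ as well — matching $O(m+\min(nt,t^{2-\alpha}))$. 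A general query $\ex(u,v,x,y)$ is answered by taking the OR over the $O(t^{1-\alpha})$ blocks meeting $[x,y]$: each fully-contained block is queried in $O(1)$ against its block structure, and the (at most two) partially-covered prefix/suffix pieces are likewise handled since they lie inside a single block. Total query time $O(t^{1-\alpha})$, which is within the claimed $O(t^\alpha)$ only when $1-\alpha \le \alpha$, i.e. $\alpha \ge \tfrac12$.

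The remaining obstacle, and the part I expect to be most delicate, is the small-$\alpha$ regime $\alpha < \tfrac12$, where a linear scan over $O(t^{1-\alpha}) = \omega(t^\alpha)$ blocks is too slow for the query bound. The fix I would pursue is a second-level structure over the sequence of blocks: treat each block as a ``super-version'' and precompute, for each vertex $v$ and each block, which component $v$ lies in at the block boundaries, then build a sparse ancestor/segment-tree-like auxiliary index so that the disjunction over a contiguous range of blocks can be evaluated by combining $O(t^{1-2\alpha})$ coarser precomputed answers plus $O(t^\alpha)$ fine ones; alternatively, recurse the block decomposition so the effective branching is $O(t^\alpha)$ per level with $O(1)$ levels. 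Either way the preprocessing stays $O(\min(nt,t^{2-\alpha}))$ because the coarse tables have size $O(t^{1-\alpha}\cdot n) = O(nt^{1-\alpha}) \subseteq O(nt)$ and $O(t^{1-\alpha}\cdot t^\alpha) = O(t)$ for the boundary component labels. I would finish by verifying the space bound $O(\min(nt,t^{2-\alpha}))$ directly: the block structures occupy $\sum_{\text{blocks}} O(n't') = O(t^{1-\alpha}\cdot t^{2\alpha}) = O(t^{1+\alpha})$, dominated by $O(t^{2-\alpha})$ for $\alpha<1$, and simultaneously bounded by $O(nt)$ via the $n' = O(t^\alpha)$, summed-to-$n$ accounting of the contracted components.
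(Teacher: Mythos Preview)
Your block decomposition is flipped relative to what the theorem needs, and this breaks the proof in both regimes, not just $\alpha<\tfrac12$. With blocks of length $t^\alpha$ you get query time $O(t^{1-\alpha})$, which exceeds $O(t^\alpha)$ whenever $\alpha<\tfrac12$; you noticed this and sketched a recursive fix, but that sketch is not a proof. More seriously, your preprocessing analysis is wrong for $\alpha>\tfrac12$: the claim ``since $\alpha<1$ the first term is dominated by the second'' is false, because $t^{1+\alpha}\le t^{2-\alpha}$ holds only for $\alpha\le\tfrac12$. So for $\alpha>\tfrac12$ your preprocessing is $\Theta(t^{1+\alpha})$, strictly worse than the claimed $O(t^{2-\alpha})$, and the same error infects the space bound. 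In short, your construction only achieves the stated trade-off at the single point $\alpha=\tfrac12$.

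The paper's approach is to reverse the roles: take $O(t^\alpha)$ blocks each of length $\Theta(t^{1-\alpha})$, so a query naturally touches $O(t^\alpha)$ blocks. The difficulty this creates is exactly the one you stumbled on from the other side: naive per-block contraction would cost $O(t)$ per block and hence $O(t^{1+\alpha})$ total, again too much when $\alpha>\tfrac12$. The paper resolves this by reusing the segment-tree structure $T$ from Section~\ref{sec:tree_structure}: $T$ is built once in $O(m+t\log n)$ time, and for each elementary block interval $[a_i,b_i]$ the reduced graph $S_{[a_i,b_i]}$ of size $O(t^{1-\alpha})$ is already available, so each block's $\langle O(n't'),O(1)\rangle$ structure costs $O((t^{1-\alpha})^2)$ and the total is $O(t^{2-\alpha})$. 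The tree $T$ is also what makes the query work: to map $u$ and $v$ into each of the $O(t^\alpha)$ block graphs, one traverses the top $D\approx\alpha\log t$ levels of $T$, which is a subtree of $O(t^\alpha)$ nodes, yielding all block representatives in $O(t^\alpha)$ total time. This use of $T$ is the missing idea in your proposal; neither the ad-hoc per-block contraction nor the vague second-level recursion substitutes for it.
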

\begin{proof}
  If $n=o(t^{1-\alpha})$, then $nt=o(t^{2-\alpha})$ and the
  $\langle O(m+nt), O(1)\rangle$ data structure for answering
  $\ex$ queries from \cite{Lacki:2013} is sufficient to finish
  the proof.

  Let us assume that $n=\Omega(t^{1-\alpha})$.
We first build the tree-like structure $T$ in $O(m+t\log{n})$ time.
Let $D$ be the largest integer such that $2^D\leq t^\alpha$.
We split the graph timeline into blocks of size $2^{B-D}$, which is roughly $t^{1-\alpha}$.
Denote the versions of the $i$-th block by $G_{a_i},\ldots,G_{b_i}$.
Observe that $[a_i,b_i]$ is an elementary interval.

Denote by $l_i(v)$ the pair $(s,P)$, where $P$ is
the last interval on a path from the root $[1,t]$ to $[a_i,b_i]$
such that $v\in V$ is represented by $s$ in $S_P$.

Consider answering the \emph{block query} $\ex(u,v,x,y)$,
where $[x,y]\subseteq [a_i,b_i]$.
Set $l_i(v)=(s_v,P_v)$ and $l_i(w)=(s_w,P_w)$.
If $P_v\neq P_w$, then the answer is clearly $\textbf{false}$.
Otherwise, if $P_v\neq [a_i,b_i]$, then neither $v$ nor $w$ are
represented in $S_{[a_i,b_i]}$ and thus the answer is
the same as the result of a comparison $s_v=s_w$.
In the last case, when $P_v=P_w=[a_i,b_i]$, the result can vary.
However, we can use the $\langle O(m+nt), O(1)\rangle$ data structure 
for answering $\ex$ queries from \cite{Lacki:2013}.
For each $i$, we build this structure for a graph with vertices $V(S_{[a_i,b_i]})$ and the timeline
induced by the subsequent edge updates in graphs $G_{a_i},\ldots,G_{b_i}$.
The size of $V(S_{[a_i,b_i]})$ as well as the length of this timeline
is $O(b_i-a_i)=O(t^{1-\alpha})$, so it can be initialized in time $O((t^{1-\alpha})^2)$
to answer queries in $O(1)$ time.
In a single query, it gives us the desired answer to $\ex(s_v,s_w,x,y)$.

As a result, once we have found $l_i(v)$ and $l_i(w)$,
we can answer the block query in constant time.
Observe that a general $\ex$ query can be answered by issuing $O(t^{\alpha})$ block queries.
All the required $O(t^\alpha)$ $l_j(u)$ values can be computed
by traversing the first $D$ levels of $T$,
which can be implemented to work in time $O(t^\alpha)$.
We need to build $O(t^\alpha)$ $\ex$ data structures --- each in time $O((t^{1-\alpha})^2)$,
so the total initialization time is $O(m+t^{2-\alpha}+t\log{n})=O(m+t^{2-\alpha})$.
\end{proof}

\section{Open problems}\label{sec:open_problems}
For $\fa$ and $\fea$ queries, we gave an $\langle O(m+t\log n),O(\log n)\rangle$
data structure.
What about the biconnectivity?
Although it is possible to propose a similar tree-like structure
that represents biconnectivity in individual versions, it seems hard
to extend it to $\fea$-like queries.
The main obstacle is biconnectivity relation on vertices not
being an equivalence relation.

It would be also interesting to know whether even faster query
(without sacrificing $O(t\log n)$ initialization time) is possible
for $\fa$ queries.

Concerning $\ex$ queries, we proved that beating our
trade-off structure in the domain of combinatorial algorithms
implies a faster combinatorial matrix multiplication algorithm.
However, is there a way to employ fast matrix multiplication
to obtain a data structure for $\ex$ queries with
preprocessing/query time product of $O(t^{2-\epsilon})$?
\bibliographystyle{plain}
\bibliography{paper}

\renewcommand{\thesection}{\Alph{section}}
\appendix

\section{Omitted pseudocode} \label{a:code}
\begin{algorithm}[H]
\begin{algorithmic}[1]
    \Function{Comp-Id}{$w,a,b,c$} \Comment{$w \in V(S_{[a,b]}), c\in [a,b]$}
        \If{$a=b$}
            \State \Return $(w,[a,b])$
        \EndIf
        \State $\imid:=\left\lfloor\frac{a+b}{2}\right\rfloor$
        \If{$c \leq \imid$}
            \If{$\lar_{[a,b]}[w] = \perp$}
                \State \Return $(w,[a,b])$
            \Else
                \State \Return \Call{Comp-Id}{$\lar_{[a,b]}[w],a,\imid,c$}
            \EndIf
        \Else
            \If{$\rar_{[a,b]}[w] = \perp$}
                \State \Return $(w,[a,b])$
            \Else
                \State \Return \Call{Comp-Id}{$\rar_{[a,b]}[w],\imid+1,b,c$}
            \EndIf
        \EndIf
    \EndFunction

\end{algorithmic}
\end{algorithm}

\begin{algorithm}[H]
\begin{algorithmic}[1]
    \Function{Forall-Aux}{$s_1,s_2,x,y,a,b$}
      \If{$s_1=s_2$}
        \State \Return \textbf{true}
      \EndIf
      \If{$[x,y]=[a,b]$} \Comment{$[x,y]$ is elementary, we refer to fingerprints}
      \If{$H_{[a,b]}(s_1)=H_{[a,b]}(s_2)$}
          \State \Return \textbf{true}
        \Else
          \State \Return \textbf{false}
        \EndIf
      \EndIf
      \State $\imid:=\lfloor\frac{a+b}{2}\rfloor$
      \If{$x\leq \imid$}
        \If{$\lar_{[a,b]}[s_1]=\perp \textbf{ or } \lar_{[a,b]}[s_2]=\perp$}
          \State \Return \textbf{false}
        \EndIf
        \If{\textbf{not }\Call{Forall-Aux}
          {$\lar_{[a,b]}[s_1],\lar_{[a,b]}[s_2],x,\imid,a,\imid$}}
          \State \Return \textbf{false}
        \EndIf
      \EndIf
      \If{$y> \imid$}
        \If{$\rar_{[a,b]}[s_1]=\perp \textbf{ or } \rar_{[a,b]}[s_2]=\perp$}
          \State \Return \textbf{false}
        \EndIf
        \If{\textbf{not }\Call{Forall-Aux}
          {$\rar_{[a,b]}[s_1],\rar_{[a,b]}[s_2],\imid+1,y,\imid+1,b$}}
          \State \Return \textbf{false}
        \EndIf
      \EndIf
      \State \Return \textbf{true}
    \EndFunction
\end{algorithmic}
\end{algorithm}

\begin{algorithm}[H]
\begin{algorithmic}[1]
    \Procedure{Report-Triangles}{G}
      \State construct the timeline $G^t$ from the proof of Lemma~\ref{l:report}
      along with numbers $a_i,b_i$
        \For{$(u,v)\in E(G)$}   \Comment $u<v$
      \State \Call{report-internal}{$u,v,v+1,|V(G)|$}
        \EndFor
    \EndProcedure
    \State
    \Procedure{Report-Internal}{$u,v,x,y$}
        \If{$x>y$}
            \State \Return
        \EndIf
      \If{\textbf{not }$\ex(u,v,a_x,b_y)$}
            \State \Return
        \EndIf
        \If{$x=y$}
            \State output triangle $(u,v,x)$
            \State if $k$ triangles have been reported so far, stop
            \State \Return
        \EndIf
        \State $\imid := \left\lfloor \frac{x+y}{2}\right\rfloor$
        \State\Call{Report-Internal}{$u,v,x,\imid$}
        \State\Call{Report-Internal}{$u,v,\imid+1,y$}
    \EndProcedure
\end{algorithmic}
\end{algorithm}

\end{document}